\newcommand{\ra}[1]{\renewcommand{\arraystretch}{#1}}
\newcommandx{\declan}[2][1=]{\todo[linecolor=blue,backgroundcolor=blue!25,bordercolor=blue,size=small,author=Declan,#1]{#2}}
\newcommandx{\iman}[2][1=]{\todo[linecolor=green,backgroundcolor=green!25,bordercolor=green,size=small,author=Iman,#1]{#2}}
\newcommandx{\airlie}[2][1=]{\todo[linecolor=purple,backgroundcolor=purple!25,bordercolor=purple,size=small,author=Airlie,#1]{#2}}
\newcounter{prob}
\newtheorem{prob}[prob]{\bf Problem}
\newtheorem{lemm}{\bf Lemma}[]
\newtheorem{prop}[lemm]{\bf Proposition}
\begin{document}
%
% paper title
% Titles are generally capitalized except for words such as a, an, and, as,
% at, but, by, for, in, nor, of, on, or, the, to and up, which are usually
% not capitalized unless they are the first or last word of the title.
% Linebreaks \\ can be used within to get better formatting as desired.
% Do not put math or special symbols in the title.
\title{Fast Spline Trajectory Planning: \\ Minimum Snap and Beyond}
%
%
% author names and IEEE memberships
% note positions of commas and nonbreaking spaces ( ~ ) LaTeX will not break
% a structure at a ~ so this keeps an author's name from being broken across
% two lines.
% use \thanks{} to gain access to the first footnote area
% a separate \thanks must be used for each paragraph as LaTeX2e's \thanks
% was not built to handle multiple paragraphs
%

\author{Declan Burke,~\IEEEmembership{Student Member,~IEEE,}
        Airlie Chapman,~\IEEEmembership{Member,~IEEE},
        Iman Shames,~\IEEEmembership{Member,~IEEE}% <-this % stops a space
\thanks{The first and second authors are with the Department of Mechanical Engineering, The University of Melbourne, VIC, 3010, Australia and the third with the Research School of Engineering, The Australian National University, ACT, 0200,  Australia. Emails: {\tt\small \{declanb@student., airlie.chapman@\}unimelb.edu.au, iman.shames@anu.edu.au}}}

% The paper headers
\markboth{IEEE Transactions on Robotics}%
{}

\maketitle

\begin{abstract}
In this paper, we study spline trajectory generation via the solution of two optimisation problems: (i) a quadratic program (QP) with linear equality constraints and (ii) a nonlinear and nonconvex optimisation program. We propose an efficient algorithm to solve (i), which we then leverage to use in an iterative algorithm to solve (ii). Both the first algorithm and each iteration of the second algorithm have linear computational complexity in the number of spline segments. The scaling of each algorithm is such that we are able to solve the two problems faster than state-of-the-art methods and in times amenable to real-time trajectory generation requirements. The trajectories we generate are applicable to differentially flat systems, a broad class of mechanical systems, which we demonstrate by planning trajectories for a quadrotor. 
\end{abstract}

\begin{IEEEkeywords}
Trajectory planning, Minimum snap trajectory, Spline optimisation.
\end{IEEEkeywords}

\IEEEpeerreviewmaketitle
% \iman[inline]{Only use  present tenses (simple and perfect). Don't use past tenses. You can use the future tense in some rare occasions. E.g., outlining future directions.}
\section{Introduction}

In trajectory planning for quadrotors, minimum snap, i.e., fourth derivative, splines are commonly employed~\cite{mahony2012multirotor}. Minimum snap trajectory pioneers Mellinger and Kumar solve an equality constrained quadratic program (QP) to generate spline trajectories~\cite{Mellinger2011}. The numerical stability of this problem is considered by Richter {\it et al.}~\cite{Richter2016} and further by de Almeida and Akella~\cite{DeAlmeida2017}, who each propose methods with better numerical stability. In a previous work, we propose a well-conditioned formulation of Mellinger and Kumar's problem as well as an algorithm that solves it with linear computational complexity in the number of segments in the spline trajectory~\cite{burke2020}. Subsequent to our work, an upcoming paper by Wang {\it et al.} improves upon our results by presenting an algorithm without matrix inverse calculations and thus with a lower computational complexity coefficient \cite{wang2020generating}. Another means of generating spline trajectories is achieved by formulating a problem in which the time between spline segments is included as a variable over which to optimise~\cite{Mellinger2011},~\cite{Richter2016}. The trajectories generated by this problem have smaller snap than those generated by the aforementioned QP, but this problem is nonlinear and nonconvex~\cite{burri2015real}. In solving it, de Almeida {\it et al.} encounter long computation times and note that it is ill-suited to real-time applications~\cite{de2019real}. Also as part of their upcoming work, Wang {\it et al.} propose computationally efficient expressions for a gradient descent method used to solve a closely related problem~\cite{wang2020generating}.

The recent developments in minimum snap trajectory planning are also of benefit for trajectory planning for systems beyond quadrotors. Polynomials may be used for quadrotor trajectory planning because quadrotors are differentially flat systems, that is, all states and inputs of the system can be calculated from a set of outputs without integration~\cite{Van_Nieuwstadt1998-vs}. The class of differentially flat systems is extensive and incorporates many mechanical systems, including robot arms and cars with trailers~\cite{Murray95differentialflatness}. Motivated by the wealth of applications that could benefit from efficient spline generation algorithms, we leverage the recent developments in minimum snap trajectory planning for general spline trajectory planning. Through the use of efficient methods, we aim to enable real-time trajectory generation for platforms with varying requirements, such as the number of segments in their spline trajectories or the dimension of their trajectories.

\begin{figure}[t]
\centering
{\includegraphics[width=0.98\linewidth]{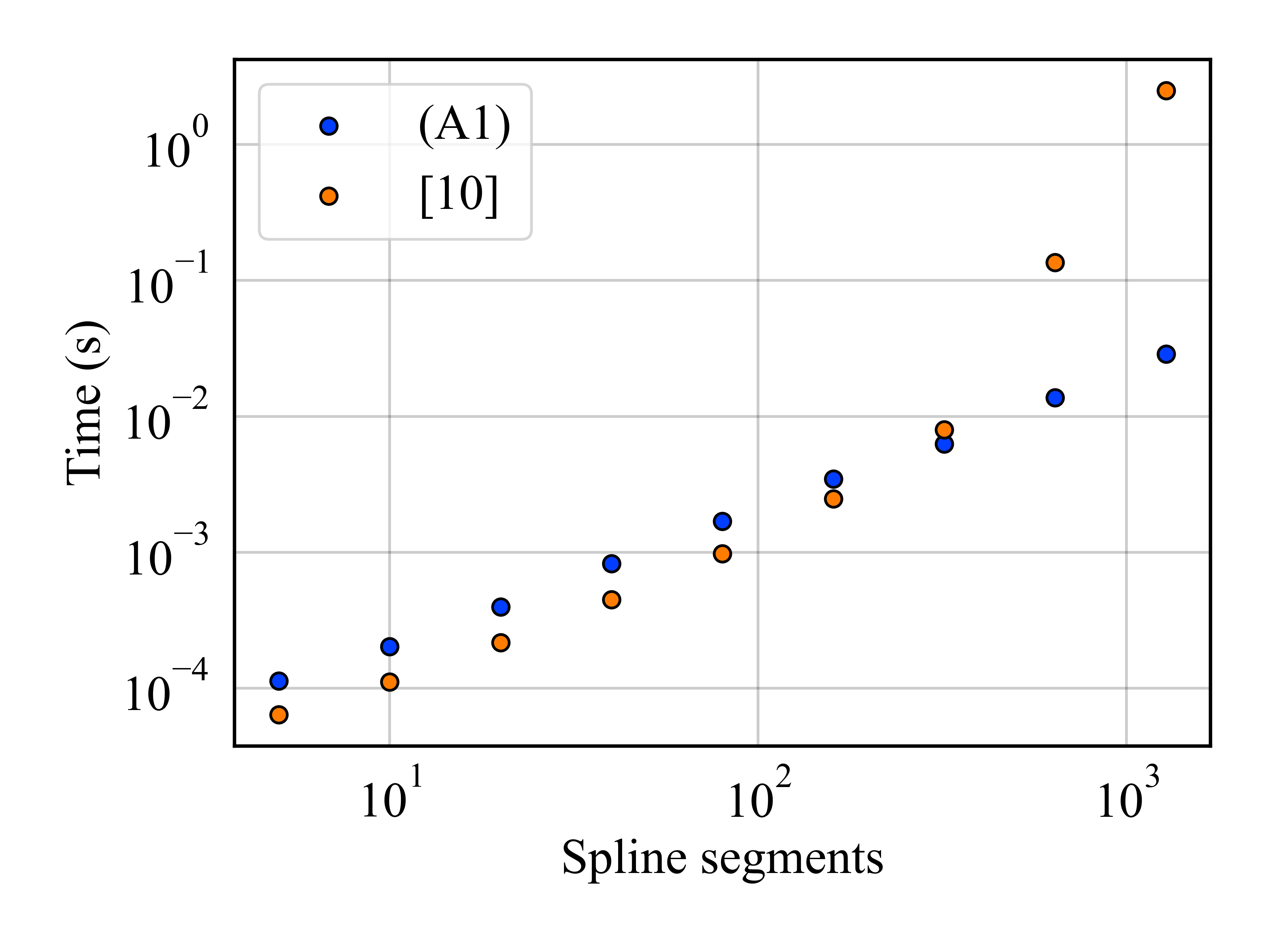}}
\caption{The computational time for generating minimum snap spline trajectories of $l$ segments using Algorithm~\ref{algo1} (A1) (blue) and a state-of-the-art solver~\cite{eth_implementation} (orange). The slope of the logarithmic data from Algorithm~\ref{algo1} is approximately one corresponding to linear computational complexity in $l$, whereas the slope of the benchmark corresponds to approximately quadratic computational complexity in $l$.}
\end{figure}

In this paper, we study and solve two problems in order to generate spline trajectories. The first problem, which we coin the {\it fixed-time problem}, is a QP with linear equality constraints. We develop a previous result~\cite{burke2020} and propose an algorithm to solve the fixed-time problem with linear computational complexity in the number of segments in the spline trajectory. Further, we present an equivalent reformulation of the underlying optimisation program that is better conditioned. The second problem, which we name the {\it variable-time problem}, is a nonlinear and nonconvex minimisation problem. To solve the variable-time problem we employ a gradient-descent approach. We propose a set of efficient expressions for calculating the gradient of the program's objective function thereby reducing the method's computational complexity. We evidence our methods by generating minimum snap trajectories for a quadrotor in real-time. We also demonstrate how our methods can be included in a broader suite of path planning algorithms by implementing an RRT* algorithm that outputs a tree of minimum snap trajectories.

In summary, the two methods we propose are efficient and scalable, thereby we are able to solve the two problems faster than the state-of-the-art.

The paper is structured as follows. In the next section, we introduce requisite notation and formulate an optimisation program to generate spline trajectories. In Section~\ref{sec::solution}, we detail an algorithm for solving the optimisation program. In Section~\ref{sec::time_optimising}, we present another optimisation program for generating spline trajectories and propose and analyse a method for its solution. In Section~\ref{sec::experiments}, we demonstrate trajectories on two experimental platforms. Concluding remarks come in the end.

\section{Fixed-Time Problem Formulation}\label{sec::formulation}

\subsection{Notation}

Scalars are written with lowercase Greek letters, vectors in lowercase Latin letters, matrices in uppercase Latin letters, and sets in calligraphic font. The only exception to this is using $h$, $i$, $j$, $k$, $l$, $m$, $n$, $p$ and $q$ to denote non-negative integers and using $J$ to denote the scalar cost function. Sequences of scalars, vectors or matrices are enumerated using subscripts $x_1,~x_2,~x_3,\ldots$. Unless specified otherwise, elements of a vector or matrix are denoted using parentheses indexed with subscripts, e.g., $(x_i)_j$ is the $j$th element of the vector $x_i$ and $(A_i)_{j,k}$ is the $(j,k)$th element of the matrix $A_i$. Euclidean space of dimension $n$ is written as $\mathbb{R}^n$. The $q$th time derivative of $x$ is denoted with $x^{(q)}\coloneqq\frac{\mathrm{d}^qx}{\mathrm{d}t^q}$. The cardinality of a set $\mathcal{X}$ is written as $|\mathcal{X}|$. The identity matrix of dimension $k$ is denoted with $I_k$. For the sequences $\eta_1,~\eta_2,\ldots$ and $\nu_1,~\nu_2,\ldots$, $\eta_i = O(\nu_i)$ is written if there exists a positive real constant $\alpha$ such that $|\eta_i|\leq \alpha|\nu_i|$ for $i$ sufficiently large. Let ${\bf P}_+:\mathbb{R}^n\rightarrow \mathbb{R}^n$ project onto the positive orthant such that $\big({\bf P}_+(x)\big)_i=\max\{x_i,0\}$ for $i=1,\ldots,n$. Given sets of column vectors $\{x_1,\dots,x_k\}$ and matrices $\{A_1,\dots,A_k\}$, let
\begin{align*}
    \mathrm{vec}(\{x_1,\dots,x_k\})&\coloneqq\begin{bmatrix}
    x_1 \\ \vdots \\ x_k
    \end{bmatrix},\\\mathrm{diag}(\{A_1,\dots,A_k\})&\coloneqq\begin{bmatrix}
    A_1 &  &  \\ & \ddots & \\ & & A_k
    \end{bmatrix}.
\end{align*}

\subsection{Optimization formulation of spline interpolation}

We study the continuous spline $\sigma(\tau):\mathbb{R}\rightarrow\mathbb{R}$ that is parametrised by time $\tau\in\mathbb{R}$
\begin{align*}
    \sigma(\tau)&=\begin{cases}
    \sigma_1(\tau), & \tau_0\leq \tau\leq \tau_1, \\
    \quad \vdots & \\
    \sigma_l(\tau), & \tau_{l-1}\leq \tau\leq \tau_l,
    \end{cases}
\end{align*}
with segments $\sigma_i(\tau)$, $i=1,\ldots,l$, defined with respect to the times $t=[\tau_0,\ldots,\tau_l]^T\in\mathbb{R}^{l+1}$. The segments are polynomials of order $n-1$ represented as
\begin{align*}
    \sigma_i(\tau)&=a_i^T z(\tau)
\end{align*}
where $z(\tau):\mathbb{R}\rightarrow\mathbb{R}^n$ is a vector of the monomial basis functions and $a_i\in\mathbb{R}^{n}$ is a vector of coefficients for $i=1,\ldots,l$. We stack the vectors of coefficients as $a=\mathrm{vec}(\{a_1,\ldots,a_l\})\in\mathbb{R}^{ln}$.
    
As in the problem of Hermite interpolation~\cite{Davis1963}, we generate a spline such that the spline and its derivatives take the values
\begin{gather*}
    (f_i)_q=\sigma_i^{(q-1)}(\tau_{i-1}), \\
    (f_i)_{q+k}=\sigma_i^{(q-1)}(\tau_i),
\end{gather*}
for $q=1,\ldots,k$ for integer $k$ such that $f_i\in\mathbb{R}^{2k}$ for $i=1,\ldots,l$. We stack the vectors of derivatives as $f=\mathrm{vec}(\{f_1,\ldots,f_l\})\in\mathbb{R}^{2kl}$.% We introduce the spline and its derivatives $s(\tau)=[\sigma(\tau),\sigma^{(1)}(\tau),\ldots,\sigma^{(k-1)}(\tau)]^T:\mathbb{R}\rightarrow\mathbb{R}^k$ and $f=\mathrm{vec}(\{f_1,\ldots,f_l\})\in\mathbb{R}^{2kl}$.

We generate the spline by finding the minimising argument of the optimisation program
% \declan[inline]{Will change when paper is finalised.}
% \airlie[inline]{optimization is with a 'z' - check throughout}
% \iman[inline]{So American. Using `s' is fine. But if you want to go with z, be consistent. so all spelling choices should be switched to the American version, e.g. neighbour to neighbor, etc.}
\begin{subequations} \label{eq::form1}
    \begin{gather}
        J(t)= \min_{\sigma(\tau), f} \int_{\tau_0}^{\tau_l} \big(\sigma^{(k-1)}(\tau)\big)^2d\tau, \label{eq::form1::cost}\\
        \sigma^{(q-1)}_i(\tau_{i-1})=(f_i)_q, \quad i=1,\ldots,l,~q=1,\ldots,k, \label{eq::form1::constr1} \\
        \sigma^{(q-1)}_i(\tau_{i})=(f_i)_{k+q}, \quad i=1,\ldots,l,~q=1,\ldots,k, \label{eq::form1::constr2} \\
        (f_i)_{q+k}=(f_{i+1})_q, \quad i=1,\ldots,l-1,~q=1,\ldots,k, \label{eq::form1::continuity} \\
        (f_i)_p = \phi,\quad  \forall (i,p,\phi)\in \mathcal{P}, \label{eq::form1::set1}
        %\forall(i,q, \phi^i_q)\in \mathcal{P}, \label{eq::form1::set1}
        % f^{(q)}_{i+1,0} = \phi^{(q)}_{i},\quad  \forall(i,q)\in \mathcal{P}, \label{eq::form1::set2}
    \end{gather}
\end{subequations}
% \begin{subequations} \label{eq::form1}
%     \begin{gather}
%         J(t)= \min_{\sigma(\tau), f} \int_{\tau_0}^{\tau_l} \big(w^T s(\tau)\big)^2d\tau, \label{eq::form1::cost}\\
%         \sigma^{(q-1)}_i(\tau_{i-1})=(f_i)_q, \quad i=1,\ldots,l,~q=1,\ldots,k, \label{eq::form1::constr1} \\
%         \sigma^{(q-1)}_i(\tau_{i})=(f_i)_{k+q}, \quad i=1,\ldots,l,~q=1,\ldots,k, \label{eq::form1::constr2} \\
%         (f_i)_{q+k}=(f_{i+1})_q, \quad i=1,\ldots,l-1,~q=1,\ldots,k, \label{eq::form1::continuity} \\
%         (f_i)_p = \phi,\quad  \forall (i,p,\phi)\in \mathcal{P}, \label{eq::form1::set1}
%         %\forall(i,q, \phi^i_q)\in \mathcal{P}, \label{eq::form1::set1}
%         % f^{(q)}_{i+1,0} = \phi^{(q)}_{i},\quad  \forall(i,q)\in \mathcal{P}, \label{eq::form1::set2}
%     \end{gather}
% \end{subequations}
where % $w\in\mathbb{R}^k$ and 
the set $\mathcal{P}$ is a collection of triplets $(i,p,\phi)$ with integer $i\in\{1,\ldots,l\}$ and $p\in\{1,\ldots 2k\}$ and $\phi\in\mathbb{R}$.

The parameters of the program are the vector $t$, the integer $k$ and the set $\mathcal{P}$. We explicitly represent the objective function in terms of the parameter $t$ as it is a focus of our study later in the paper and we note that $k$ and $\mathcal{P}$ parametrise \eqref{eq::form1} implicitly. The formulator of the problem chooses $t$ to temporally constrain the spline. The choice of $k$ determines the total derivative to be minimised as the cost \eqref{eq::form1::cost} and the degree of continuity enforced through \eqref{eq::form1::continuity}. Each $(i,p,\phi)\in\mathcal{P}$ restricts elements of feasible $f$ through the interpolation constraints \eqref{eq::form1::set1}. The size of $\mathcal{P}$ is established as follows. For each $i$, let $\mu_i^-=|\{p|p\in\{1,\ldots,k\},(i,p,\phi)\in\mathcal{P}\}|$ and $\mu_i^+=|\{p|p\in\{k+1,\ldots,2k\},(i,p,\phi)\in\mathcal{P}\}|$. Further let $m=\sum_{i=1}^{l} \mu_i^- + \mu_i^+$. Hence, there are $m=|\mathcal{P}|$ constraints constituted by \eqref{eq::form1::set1}. %If not constrained via $\mathcal{P}$, the elements of $f$ are decision variables whose values are determined by the optimality conditions of the cost \eqref{eq::form1::cost}. % This cost, and hence optimality conditions, are determined by $w$. 

\subsection{A compact representation}

In this subsection, we express \eqref{eq::form1} more compactly as part of the problem we call the fixed-time problem. We will first present the problem  and then detail the vectors and matrices employed in the formulation of the associated optimisation program. 

\begin{prob}[Fixed-Time Problem] \label{prob1} For a given $t$, let $a^\star$ and $f^\star$ be the solutions of
\begin{subequations} \begin{align}
        J(t)= \min_{a, f} \quad & a^TH(t)a, \label{eq::form2::cost} \\
        \text{s.t.} \quad & V(t) a = f, \label{eq::form2::vand}\\
        &E f = 0, \\
        &P f = b. \label{eq::form2::phi}
    \end{align}\label{eq::form2}
\end{subequations}
Find the continuous spline $\sigma(\tau)$ with coefficients $a^\star$.
\end{prob}
% \begin{prob}[Fixed-Time Problem] \label{prob1} For a given $t\in\mathbb{R}^{l+1}$, let $a^\star\in\mathbb{R}^{ln}$ and $f^\star\in\mathbb{R}^{2kl}$ be the minimising arguments of the optimisation program 
% \begin{subequations} \begin{align}
%         J(t)= \min_{a, f} \quad & a^TH(t)a, \\
%         \text{s.t.} \quad & V(t) a = f, \label{eq::form2::vand}\\
%         &E f = 0, \\
%         &P f = b. \label{eq::form2::phi}
%     \end{align}\label{eq::form2}
% \end{subequations}
% Find the continuous spline $\sigma:\mathbb{R}\rightarrow\mathbb{R}$ with coefficients $a^\star$.
% \end{prob}

To compactly represent the constraints \eqref{eq::form1::constr1} and \eqref{eq::form1::constr2}, we introduce the matrices $V_i(\tau_{i-1}, \tau_i):\mathbb{R}^2\rightarrow\mathbb{R}^{2k\times n}$ for $i=1,\ldots,l$
\begin{align*}
    V_i(\tau_{i-1}, \tau_i)&=\begin{bmatrix}
    W(\tau_{i-1}) \\
    W(\tau_i)
    \end{bmatrix},
\end{align*}
where $W(\tau):\mathbb{R}\rightarrow\mathbb{R}^{k\times n}$ is a matrix parametrised by time $\tau$ such that $\big(W(\tau)\big)_{q,j}=\big(z^{(q-1)}(\tau)\big)_j$. % The matrix $V_0(\tau_{i-1},\tau_i)$ is a transposed confluent Vandermonde matrix~\cite{higham2002accuracy}. 
For brevity, $V_i(\tau_{i-1},\tau_i)$ is written as $V_i$, for $i=1,\ldots,l$. Thus, we form the block-diagonal matrix $V(t):\mathbb{R}^{l+1}\rightarrow\mathbb{R}^{2kl\times ln}$ as $V(t)=\mathrm{diag}(\{V_1,\ldots,V_l\})$.

To represent the cost \eqref{eq::form1::cost}, we introduce the matrix
\begin{align*}
    H_i (\tau_{i-1}, \tau_i) &= \int_{\tau_{i-1}}^{\tau_{i}} z^{(k-1)}(\tau)\big(z^{(k-1)}(\tau)\big)^T d\tau,
\end{align*}
% \begin{align*}
%     H_i (\tau_{i-1}, \tau_i) &= \int_{\tau_{i-1}}^{\tau_{i}}W(\tau)^Tww^TW(\tau) d\tau,
% \end{align*}
where $H_i(\tau_{i-1}, \tau_i):\mathbb{R}^2\rightarrow\mathbb{R}^{n\times n}$. We abbreviate $H_i(\tau_{i-1},\tau_i)$ as $H_i$, for $i=1,\ldots,l$, thus, the Hessian of the cost \eqref{eq::form1::cost} is $H(t):\mathbb{R}^{l+1}\rightarrow\mathbb{R}^{ln\times ln}$ such that $H(t)=\mathrm{diag}(\{H_1,\ldots,H_l\})$.

% The cost \eqref{eq::form1::cost} is a quadratic with a Hessian with block-diagonal elements, for $i=1,\ldots,l$,
% \begin{align*}
%     H_0(\tau_{i-1}, \tau_i) &= \int_{\tau_{i-1}}^{\tau_{i}}W(\tau)^Tcc^TW(\tau) d\tau,
% \end{align*}
% where $H_0(\tau_{i-1}, \tau_i)\in\mathbb{R}^{n\times n}$.

% We abbreviate $V_0(\tau_{i-1},\tau_i)$ and $H_0(\tau_{i-1},\tau_i)$ as $V_i$ and $H_i$, respectively, for $i=1,\ldots,l$. We then form the block-diagonal matrices $V(t)=\mathrm{diag}(\{V_1,\ldots,V_l\})\in\mathbb{R}^{2kl\times ln}$ and $H(t)=\mathrm{diag}(\{H_1,\ldots,H_l\})\in\mathbb{R}^{ln\times ln}$.

Constraints \eqref{eq::form1::continuity} and \eqref{eq::form1::set1} can be written compactly with constant matrices of ones and zeros. To this end, we introduce the matrix $E \in\mathbb{R}^{k(l-1) \times 2kl}$ 
\begin{align*}
    E = \begin{bmatrix}
            0 & I_{k} & -I_{k} & 0 & 0 & \hdots & 0 & 0 & 0\\
            0 & 0 & 0 & I_{k} & -I_{k} & \hdots & 0 & 0 & 0   \\
            &  & \vdots & & & \ddots & & \vdots & \\
            0 & 0 & 0 & 0 & 0 & \hdots & I_{k} & -I_{k} & 0 
        \end{bmatrix},
\end{align*}
and the matrix $P=\mathrm{diag}(\{P_1,\ldots,P_l\})\in\mathbb{R}^{m\times 2kl}$, which is block-diagonal with elements $P_i\in\mathbb{R}^{(\mu_i^- + \mu_i^+)\times 2k}$ and $(P_i)_{j,p}=1$ if $(i,p,\phi)\in \mathcal{P}$ and $j\in\{1, \ldots,\mu_i^- + \mu_i^+\}$. We note that $P$ has one nonzero element per row.

Let $b=\mathrm{vec}(\{b_1,\ldots,b_l\})\in\mathbb{R}^{m}$ with  $b_i\in\mathbb{R}^{\mu_i^- + \mu_i^+}$, $i=1,\ldots,l$, such that $(b_i)_j=\phi$ for each $(i,p,\phi)\in\mathcal{P}$ with $j\in\{1,\ldots,\mu_i^-+\mu_i^+\}$ and is zero otherwise.

\section{Solving the Fixed-Time Problem}\label{sec::solution}
In this section, we present an algorithm of linear computational complexity in the number of spline segments $l$ for solving the fixed-time problem. We also present an equivalent reformulation of \eqref{eq::form2} that may be solved in its place to avoid numerical errors due to ill-conditioning.

\subsection{Solution Algorithm}
  \begin{algorithm}[t]
    \SetAlgoLined
    \For{i=1,\ldots, l} {
        construct $V_i,H_i,Z_i,b_{i}$; \\
        partition $A_i,B_i,C_i,c_i^-, c_i^+$; \\
        $\overline{A}_i {\gets\begin{cases}
        A_1 & i=1, \\
        A_{i}+C_{i-1}-B_{i-1}^T\overline{A}_{i-1}^{-1}B_{i-1} & i=2,\ldots,l;
        \end{cases}}$ \\
        $\overline{c}_i^- {\gets \begin{cases}
        g_1^- & i=1, \\
        g_i^- + g_{i-1}^- -B_{i-1}^T\overline{A}_{i-1}^{-1}\overline{c}_{i-1}^-
        % \begin{aligned}
        % & g_i^- + g_{i-1}^- \\
        % & \quad -B_{i-1}^TA_{i-1}^{-1}\overline{c}_{i-1}^-
        % \end{aligned}
        & i=2,\ldots,l;
        \end{cases}}$
    }
    \For{i=l,\ldots, 1} {
        $g_i^-\gets  \overline{A}^{-1}_i(\overline{c}_i^--B_ig_i^+) $; \\
        $g_i^+ {\gets \begin{cases}
        \begin{aligned}
        & (C_l-B_l^T\overline{A}_l^{-1}B_l)^{-1}  \\
        & \quad (c_l^+-B_l^T\overline{A}_l^{-1}\overline{c}_l^-)
        \end{aligned} 
        & i=l, \\
        g_{i+1}^- & i=l-1,\ldots,1;
        \end{cases}}$ \\
        solve for $f_i$ \eqref{eq::proj1}; \\
        % Can use divided differences to solve for $p_i$ here.
        solve for $a_i$ using \eqref{eq::form2::vand};
    } 
 \caption{Algorithm of Linear Computational Complexity for Solving the Fixed-Time Problem} \label{algo1}
 \end{algorithm}
% \subsection{The Algorithm}
To present our algorithm for solving \eqref{eq::form1}, first, we introduce the variables used in Algorithm~\ref{algo1} and then present expressions in terms of their block diagonal components. Let
\begin{align}\label{eq::proj1}
    f &= \overline{f} + Z g,
\end{align}
where $\overline{f} \in \mathbb{R}^{2kl}$ and $Z\in \mathbb{R}^{2kl\times (2kl-m)}$ are such that $P \overline{f}=b$ and $PZ=0$, and $g\in \mathbb{R}^{2kl-m}$.  Under the change of variables \eqref{eq::proj1}, the constraint \eqref{eq::form2::phi} is satisfied for all $g$. The vector $g$ can be thought of as the free derivative values. The notation $\overline{f}$ is used stylistically to reflect that $\overline{f}$ is not a decision variable and its components are either zeros or the parameters $\phi$ of the $(i,p,\phi)\in\mathcal{P}$.

We partition $\overline{f},~g$ and $Z$ into vectors and matrices of size  $\overline{f}_{i}\in \mathbb{R}^{2k}$, $g_{i}\in \mathbb{R}^{2k-\mu_i^- -\mu_i^+}$ and $Z_{i}\in \mathbb{R}^{2k \times (2k -\mu_i^- -\mu_i^+)}$, for $i=1,\ldots,l$, such that $\overline{f}=\mathrm{vec}(\{\overline{f}_1,\ldots,\overline{f}_l\})$, $g=\mathrm{vec}(\{g_1,\ldots,g_l\})$ and  $Z=\mathrm{diag}(\{Z_{1}, \ldots, Z_{l}\})$. Further, let $g_i=\mathrm{vec}(\{g_i^-, g_i^+\})$ where  $g_i^-\in\mathbb{R}^{k-\mu_i^-}$ and $g_i^+\in\mathbb{R}^{k-\mu_i^+}$. For $i=1,\ldots,l$, let
\begin{subequations}\label{eq::partition}
    \begin{align}
    Z_i^TV_i^{-T}H_iV_i^{-1}Z_i &= \begin{bmatrix}
    A_i & B_i \\
    B_i^T & C_i
    \end{bmatrix}, \\
    Z_i^TV_i^{-T}H_iV_i^{-1}
    \overline{f}_{i}&= \begin{bmatrix}
    c_i^- \\
    c_i^+
    \end{bmatrix}
\end{align}
\end{subequations}
where $A_i\in\mathbb{R}^{(k-\mu_i^-)\times (k-\mu_i^-)}$, $B_i\in\mathbb{R}^{(k-\mu_i^-)\times (k-\mu_i^+)}$, $C_i\in\mathbb{R}^{(k-\mu_i^+)\times (k-\mu_i^+)}$, $c_i^-\in\mathbb{R}^{k-\mu_i^-}$ and $c_i^+\in\mathbb{R}^{k-\mu_i^+}$. 

\begin{prop}\label{prop::alg_complexity}
Algorithm~\ref{algo1} solves \eqref{eq::form1} in $O(k^3l)$. 
\end{prop}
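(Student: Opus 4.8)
The plan is to prove correctness and the complexity bound in turn, since the $O(k^3l)$ count follows readily once Algorithm~\ref{algo1} is identified as a block tridiagonal solver of Thomas type. First I would reduce \eqref{eq::form2} to a quadratic program in the free variables $g$ alone. Eliminating $a$ through $a_i = V_i^{-1} f_i$ — which is legitimate because each $V_i\in\mathbb{R}^{2k\times n}$ is square and invertible when $n = 2k$ — turns the cost \eqref{eq::form2::cost} into $f^T V^{-T} H V^{-1} f$, and substituting the parametrisation \eqref{eq::proj1} removes \eqref{eq::form2::phi} identically. What remains is to minimise a quadratic in $g$ subject only to the continuity constraint $E f = 0$.

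The key structural step is to show that this continuity constraint merely identifies the free upper derivatives of one segment with the free lower derivatives of its successor. Since $E$ enforces $(f_i)_{k+q} = (f_{i+1})_q$ and $\overline f$, $Z$ are block aligned to this partition, imposing $Ef = 0$ is equivalent to setting $g_i^+ = g_{i+1}^-$ for $i = 1,\ldots,l-1$, which is exactly the backward assignment $g_i^+ \gets g_{i+1}^-$ in Algorithm~\ref{algo1}. After this identification the decision variables form a single chain $g_1^-,\, g_1^+ = g_2^-,\, \ldots,\, g_{l-1}^+ = g_l^-,\, g_l^+$, and, using the partition \eqref{eq::partition}, the objective becomes a sum of per segment quadratics coupling only consecutive chain variables through the off diagonal blocks $B_i$. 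Hence the stationarity conditions form a block tridiagonal linear system whose diagonal block for the shared variable is $A_i + C_{i-1}$ and whose sub and super diagonals are built from $B_{i-1}$.

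Next I would verify that the two loops of Algorithm~\ref{algo1} are precisely the block Thomas algorithm for this system: the forward recursion for $\overline A_i$ and $\overline c_i^-$ is the Schur complement elimination $\overline A_i = A_i + C_{i-1} - B_{i-1}^T \overline A_{i-1}^{-1} B_{i-1}$ that renders the system block upper triangular, and the backward recursion recovers each $g_i^-$ and $g_i^+$, after which $f_i$ follows from \eqref{eq::proj1} and $a_i$ from \eqref{eq::form2::vand}. This establishes that the algorithm returns the unique minimiser and therefore solves \eqref{eq::form1}. For the complexity, observe that every matrix in the recursions has dimension at most $2k$: with $n = 2k$ the blocks $V_i$, $H_i$, $V_i^{-1}$ are $2k\times 2k$, while $A_i$, $B_i$, $C_i$, $\overline A_i$ and $Z_i$ are bounded by $k$ in each dimension because $k - \mu_i^\pm \le k$. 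Each forward iteration thus performs a fixed number of inversions and products of $O(k)\times O(k)$ matrices — forming $V_i^{-1}$, the partition \eqref{eq::partition}, and the update $B_{i-1}^T \overline A_{i-1}^{-1} B_{i-1}$ — each costing $O(k^3)$; each backward iteration applies a stored inverse, performs the single $O(k^3)$ inversion of $C_l - B_l^T \overline A_l^{-1} B_l$, and does $O(k^2)$ matrix--vector products. Summing $O(k^3)$ per iteration over the $2l$ iterations yields $O(k^3 l)$.

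The main obstacle is the structural step of the second paragraph: correctly tracking how the change of variables \eqref{eq::proj1}, the block diagonal matrices $H$, $V$, $Z$, and the banded matrix $E$ interact so that the reduced Hessian is genuinely block tridiagonal and not merely block banded with wider coupling. This is what guarantees that only $O(l)$ block operations are required; were $E$ to couple non adjacent segments, the forward--backward sweep would not close in linear time. Once this structure is confirmed, both the correctness of the Thomas style recursions and the $O(k^3 l)$ count are routine, the latter hinging only on verifying $n = 2k$ so that all block dimensions are $O(k)$.
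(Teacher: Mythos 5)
Your proposal is correct and arrives at exactly the recursions of Algorithm~\ref{algo1}, but by a genuinely different route from the paper. The paper keeps the continuity constraint $EZg=0$ explicit, writes the full KKT system \eqref{eq::KKT} with multipliers $\lambda_i$, permutes it into a block-tridiagonal form whose diagonal blocks $D_i$ are $3\times3$ block matrices containing $A_i$, $B_i$, $C_i$ and identity couplings to $\lambda_i$, and then reads the recursions off a block LU factorisation of that augmented (range-space) system. You instead eliminate the constraint directly: you observe that $Ef=0$ identifies $g_i^+$ with $g_{i+1}^-$, chain the variables, and obtain an unconstrained block-tridiagonal normal-equations system with diagonal blocks $A_i+C_{i-1}$, which you solve by the block Thomas algorithm (a null-space reduction). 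The two factorisations produce the same Schur complements $\overline{A}_i=A_i+C_{i-1}-B_{i-1}^T\overline{A}_{i-1}^{-1}B_{i-1}$, so the algorithmic content and the $O(k^3l)$ count coincide; your route is arguably cleaner for proving the proposition since Algorithm~\ref{algo1} never actually computes $\lambda$, while the paper's route additionally yields the multipliers via \eqref{rec3}. One caveat you correctly flag as the crux, and which is equally present but unstated in the paper's proof: the identification $g_i^+=g_{i+1}^-$ (and the identity blocks in the paper's $D_i$ and $M$) presupposes that the constrained derivative orders at a shared knot match from both sides, so that $k-\mu_i^+=k-\mu_{i+1}^-$ and the fixed values in $\overline{f}$ are consistent with \eqref{eq::form1::continuity}; without this, $EZ$ is not a clean difference of identities and both arguments would need amending. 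Your explicit use of $n=2k$ to justify inverting $V_i$ is likewise an assumption the paper makes silently.
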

\begin{proof}
    See Appendix~\ref{app::alg_complexity}.
\end{proof}

We also note that an algorithm with low computational complexity in $l$ is important given the other variable in the complexity bound $k$ is practically limited by application. For example, $k=4$ is used to minimise the jerk of robot arm trajectories~\cite{Simon1993} and $k=5$ is used to minimise the snap of quadrotor trajectories~\cite{Mellinger2011}. Linear in $l$, our algorithm's computational complexity facilitates spline trajectories of many segments.

\subsection{A better conditioned formulation}

% \declan[inline]{A well received part of the IROS paper was the comments on conditioning. We may need to include a further investigation. }

It is known that \eqref{eq::form2} has matrices that are poorly conditioned and that significant numerical errors are encountered when generating trajectories of more than $50$ segments~\cite{DeAlmeida2017}. To remedy the numerical instability of \eqref{eq::form2}, we employ the same better conditioned matrices as~\cite{burke2020}. The reformulation is
% \todo[inline]{One does not achieve a reformuation. One arrives at one or just simplly write what really matters.}
\begin{subequations} \label{eq::form_nondim}
    \begin{align}
        J(t)=\min_{a,f} \quad &a^T \widetilde{H}(t) a, \label{eq::form_nondim::cost}\\
        \text{s.t.} \quad & G(t)\widetilde{V}a =f, \label{eq::form_nondim::vand} \\
        & E f =0, \label{eq::form_nondim::const1} \\
        & P f = b, \label{eq::form_nondim::const2}
    \end{align}
\end{subequations}
where $\widetilde{V}=\mathrm{diag}\{V_1(-1,1),\ldots,V_l(-1,1)\}\in\mathbb{R}^{2kl\times ln}$ and  $\widetilde{H}(t)=\mathrm{diag}\{(2/(\tau_1-\tau_0))^{2k-1}H_1(-1,1),\ldots,(2/(\tau_l-\tau_{l-1}))^{2k-1}H_l(-1,1)\}\in\mathbb{R}^{ln\times ln}$. % We introduce the vector of the difference between times $u=[\delta_1,\ldots,\delta_l]^T\in\mathbb{R}^l$ where $\delta_i=\tau_i-\tau_{i-1}$. 
The matrix $G(t)$ has block-diagonal submatrices $G_i(\tau_{i-1},\tau_i): \mathbb{R}^2\rightarrow \mathbb{R}^{2k\times 2k}$ such that $G_i(\tau_{i-1},\tau_i)=\mathrm{diag}\{1, 2/(\tau_i-\tau_{i-1}), \ldots, (2/(\tau_i-\tau_{i-1}))^{k-1}, 1, \ldots, (2/(\tau_i-\tau_{i-1}))^{k-1}\}$ for $i=1,\ldots,l$. We abbreviate $G_i(\tau_{i-1},\tau_i)$ as $G_i$ for $i=1\ldots,l$ and form $G(t):\mathbb{R}^{l+1}\rightarrow\mathbb{R}^{2kl\times 2kl}$ as $G(t)=\mathrm{diag}\{G_1,\ldots,G_l\}$. % The matrix $G(t)$ scales the rows of the $\widetilde{V}$ such that the constraints \eqref{eq::form2::vand} and \eqref{eq::form_nondim::vand} are equivalent. 

% The optimisation program \eqref{eq::form_nondim} is derived from \eqref{eq::form1} via a change of variables. %
While the optimal cost of \eqref{eq::form2} and \eqref{eq::form_nondim} are equal, the corresponding optimisers are different. The relationship between the solutions of the two programs is summarised next.

\begin{prop}\label{prop::scaling}
Let $a^\star$ and $f^\star$ solve \eqref{eq::form2} and $\sigma(\tau)$ be the spline with coefficients $a^\star$. Further, let $\widetilde{a}^\star=\mathrm{vec}\{\widetilde{a}^\star_1,\ldots, \widetilde{a}^\star_l\}$ such that $\widetilde{a}^\star_i$ is the vector of coefficients of $\widetilde{\sigma}_i(\rho)$ for $i=1,\ldots,l$. Then, $\widetilde{a}^\star$ and $f^\star$ solve \eqref{eq::form_nondim} if and only if $\widetilde{\sigma}_i(\rho)=\sigma_i((\tau_i-\tau_{i-1})\rho/2 + (\tau_i+\tau_{i-1})/2)$.
\end{prop}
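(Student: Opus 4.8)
The plan is to construct an explicit, cost-preserving bijection between the feasible sets of \eqref{eq::form2} and \eqref{eq::form_nondim} that is induced on each segment by the affine time change $\tau = (\tau_i-\tau_{i-1})\rho/2 + (\tau_i+\tau_{i-1})/2$, i.e.\ by $\widetilde\sigma_i(\rho)=\sigma_i(\tau)$, and then to identify the proposition as the statement that this bijection carries optimizers to optimizers. Write $\Delta_i \coloneqq \tau_i-\tau_{i-1}$ and note that in both programs the variable $f$ collects derivative values taken with respect to the \emph{original} time $\tau$; consequently the continuity constraint $Ef=0$ and the interpolation constraint $Pf=b$ involve only $f$ and are literally the same in the two programs. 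The crux is therefore to relate the Vandermonde constraints and the two costs under the time change.

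First I would establish the derivative-scaling identity. If $\widetilde\sigma_i(\rho)=\sigma_i(\Delta_i\rho/2+(\tau_i+\tau_{i-1})/2)$, then the chain rule gives $\widetilde\sigma_i^{(q-1)}(\rho)=(\Delta_i/2)^{q-1}\sigma_i^{(q-1)}(\tau)$ for each $q$. Since $(a^\star,f^\star)$ is feasible for \eqref{eq::form2} we have $\sigma_i^{(q-1)}(\tau_{i-1})=(f_i^\star)_q$ and $\sigma_i^{(q-1)}(\tau_i)=(f_i^\star)_{k+q}$. Evaluating at $\rho=-1$ and $\rho=1$ (equivalently $\tau=\tau_{i-1}$ and $\tau=\tau_i$) and recalling that $\widetilde V_i=V_i(-1,1)$ stacks exactly the $\rho$-derivatives of the basis at $\rho=\pm1$, this shows that the $q$-th and $(k+q)$-th entries of $\widetilde V_i\widetilde a_i^\star$ equal $(\Delta_i/2)^{q-1}(f_i^\star)_q$ and $(\Delta_i/2)^{q-1}(f_i^\star)_{k+q}$. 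Since $G_i$ is by definition the diagonal matrix whose $q$-th and $(k+q)$-th entries are $(2/\Delta_i)^{q-1}$, left-multiplication by $G_i$ cancels these factors exactly, so that $G_i\widetilde V_i\widetilde a_i^\star=f_i^\star$; stacking over $i$ yields \eqref{eq::form_nondim::vand}. Because $V_i$ (hence $\widetilde V_i$ and $G_i\widetilde V_i$) is square and invertible, the argument runs in reverse as well: $G_i\widetilde V_i\widetilde a_i=f_i^\star$ forces $\widetilde a_i$ to be the coefficient vector of the time-rescaled polynomial. Thus the rescaling relation is \emph{equivalent} to feasibility of $(\widetilde a^\star,f^\star)$ for \eqref{eq::form_nondim}.

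Next I would verify cost preservation. Substituting $\tau = \Delta_i\rho/2+(\tau_i+\tau_{i-1})/2$ in $H_i(-1,1)=\int_{-1}^{1}z^{(k-1)}(\rho)(z^{(k-1)}(\rho))^T d\rho$ and using the scaling identity with $q-1=k-1$, the squared integrand picks up a factor $(\Delta_i/2)^{2(k-1)}$ while $d\rho=(2/\Delta_i)\,d\tau$ contributes the Jacobian $2/\Delta_i$; together these give $(\Delta_i/2)^{2k-1}$. Hence $\widetilde a_i^{\star T}H_i(-1,1)\widetilde a_i^\star=(\Delta_i/2)^{2k-1}a_i^{\star T}H_i a_i^\star$, and the prefactor $(2/\Delta_i)^{2k-1}$ built into $\widetilde H(t)$ cancels it, so that each block contributes equally and $\widetilde a^{\star T}\widetilde H(t)\widetilde a^\star=a^{\star T}H(t)a^\star=J(t)$.

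Finally I would assemble the two directions. The map sending a feasible $(a,f)$ of \eqref{eq::form2} to $(\widetilde a,f)$ (coefficients of the rescaled polynomial, same $f$) is a bijection between the feasible sets—by the equivalence established above and the invertibility of the coefficient-to-rescaled-coefficient map—and it preserves the cost; therefore the two programs share the optimal value $J(t)$. For ($\Leftarrow$): if the rescaling relation holds, then $(\widetilde a^\star,f^\star)$ is feasible and attains cost $J(t)$, hence solves \eqref{eq::form_nondim}. For ($\Rightarrow$): if $(\widetilde a^\star,f^\star)$ solves \eqref{eq::form_nondim} it is in particular feasible, and by the equivalence in the second paragraph feasibility with the fixed $f^\star$ already forces the rescaling relation. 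I expect the main obstacle to be purely the bookkeeping of the chain-rule powers through $V_i$, $\widetilde V_i$ and $G_i$: one must check that the per-derivative-order weights $(2/\Delta_i)^{q-1}$ in $G_i$ are exactly those needed to restore $f$ to its original-time meaning, and that the squaring-plus-Jacobian powers in the cost combine to precisely the $(2/\Delta_i)^{2k-1}$ that defines $\widetilde H(t)$—the conceptual content is entirely contained in these two cancellations.
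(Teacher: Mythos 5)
Your proposal is correct and follows essentially the same route as the paper's proof: a segment-wise affine change of variables turns the cost integral into the $(2/(\tau_i-\tau_{i-1}))^{2k-1}$-weighted form that defines $\widetilde{H}(t)$, the chain rule turns the interpolation constraints into $V(t)a=G(t)\widetilde{V}\widetilde{a}$, and the remaining constraints \eqref{eq::form_nondim::const1}--\eqref{eq::form_nondim::const2} coincide with \eqref{eq::form2}'s. You are in fact slightly more explicit than the paper on the ``only if'' direction, invoking the invertibility of the square blocks $G_i\widetilde{V}_i$ to conclude that feasibility with the fixed $f^\star$ already forces the rescaling relation.
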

\begin{proof}
See Appendix~\ref{app:scaling}.
\end{proof}

\section{Optimising set of times}\label{sec::time_optimising}

In this section we present another problem formulation for generating optimal spline trajectories. Closely related to \eqref{eq::form1}, in this problem the time between spline segments is included as a decision variable. We leverage the results of the previous section to propose an algorithm that efficiently solves this problem.

\subsection{Variable-time problem}

Motivated by the time-optimal minimum-snap trajectories of Mellinger and Kumar~\cite{Mellinger2011}, we formulate another trajectory generation problem in which the cost is also minimised with respect to the times $t$. We name this problem the variable-time problem.

% \begin{prob}[Variable-Time Problem]\label{prob2}
    
%     \begin{subequations} \label{eq::form3}
%     Let $t^\star$ be the minimising argument of the optimisation program
%         \begin{align}
%             \min_{t}\quad & J(t), \\
%             \text{s.t.} \quad & \tau_{i-1}<\tau_i, \quad i=1, \ldots,l,
%         \end{align}
%     \end{subequations}
%     and let $a^\star$ and $f^\star$ be the minimising arguments of \eqref{eq::form2} for $t=t^\star$. Find the continuous spline $\sigma(\tau)$ with coefficients $a^\star$.
% \end{prob}

\begin{prob}[Variable-Time Problem]\label{prob2}
    Let $t^\star=[\tau_0^\star,\ldots,\tau_l^\star]^T$ be the a minimiser of
    \begin{subequations} \label{eq::form3}
        \begin{align}
            \min_{t}\quad & J(t), \\
            \text{s.t.} \quad & \tau_{i-1}<\tau_i, \quad i=1, \ldots,l.
        \end{align}
    \end{subequations}
    Find the spline $\sigma(\tau)$ by solving the fixed-time problem as described in Problem~\ref{prob1} with $t=t^\star$.
\end{prob}

A common approach to solve \eqref{eq::form3} is a gradient descent method~\cite{Mellinger2011},~\cite{Richter2016},~\cite{burri2015real}. In these works, an approximation of the gradient $\nabla_t J(t)$ is used as a search direction. The calculation of the approximate gradient typically involves solving \eqref{eq::form2} repeatedly. We also employ a gradient descent method to solve the variable-time problem. The strength of our algorithm is that it requires solving \eqref{eq::form2} only once per iteration via solving a reformulation of \eqref{eq::form3}. This reformulation alongside the proposed solution method are introduced in the following subsection.

\subsection{Efficient gradient calculation}

\begin{algorithm}[t]
    \SetAlgoLined
    $d_i \leftarrow d_0$\;
    \While{not terminated}{
    Find $f^\star$ minimising \eqref{eq::form2} for $t=Rd_i$ using Algorithm~\ref{algo1}\;
    Calculate $\nabla_d J(Rd_i)$ using Proposition~\ref{prop::gradient}\;
    Choose $\alpha_i$\;
    $d_i\leftarrow {\bf P}_+\big(d_i-\alpha_i\nabla_d J(Rd_i)\big)$\;
    $i\leftarrow i+1$\;
    }
 \caption{Algorithm for Solving Problem~\ref{prob2}} \label{algo2}
\end{algorithm}
 
% We first introduce some notation required for the solution. We write the minimum cost of \eqref{eq::form3} for fixed $T$ as
% \begin{align*}
%     J(T)&= \min_{\widehat{a}, f} \quad \frac{1}{2}\widehat{a}^T \widehat{H}(T) \widehat{a}.
% \end{align*}
% We use $\widehat{a}^\star$ and $f^\star$ to denote the arguments that minimise \eqref{eq::form3}.

We begin by introducing the change of variables $\delta_i=(\tau_i-\tau_{i-1})/2$ for $i=1,\ldots,l$. Let $d=[\delta_1,\ldots,\delta_l]^T\in\mathbb{R}^l$. Further, let $R\in\mathbb{R}^{(l+1)\times l}$ be such that $R_{i,j} = 2$ if $i<j$ and zero otherwise. By construction, the span of $R$ is all $t\in\mathbb{R}^{l+1}$ such that $\tau_0=0$. The reformulation is then
\begin{subequations} \label{eq::form6}
    \begin{align}
        \min_{d}\quad & J(Rd), \\
        \text{s.t.} \quad & \delta_{i}>0, \quad i=1, \ldots,l, \label{eq::form6::const}
    \end{align}
\end{subequations}

We note that \eqref{eq::form6} is not strictly equivalent to \eqref{eq::form3} due to the different dimensions of $d$ and $t$. However, the next proposition states that the optimal cost $J(t)$ is invariant to the initial time $\tau_0$, and thereby every solution to \eqref{eq::form6} can be used to calculate another solution to \eqref{eq::form3}.

% \declan[inline]{Go through entire paper and fix notation.}

\begin{prop}\label{prop::variable}
The vector $d^\star$ is a local minimiser of \eqref{eq::form6} if and only if $t^\star = Rd^\star$ is a local minimiser of \eqref{eq::form3}.
\end{prop}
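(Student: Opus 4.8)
The plan is to reduce the equivalence to a single structural fact: the optimal cost $J(t)$ is invariant under a uniform time shift, so that the $l+1$ knot times carry only $l$ degrees of freedom that actually affect $J$, and these are exactly the half-durations collected in $d$. First I would make the shift invariance precise. Given times $t$ and a constant $c\in\mathbb{R}$, the substitution $\tau\mapsto\tau-c$ carries any feasible spline $\sigma$ for $t+c\mathbf{1}$ to the feasible spline $\tau\mapsto\sigma(\tau-c)$ for $t$: the integral in \eqref{eq::form3} is translation invariant, the continuity constraints compare endpoint derivatives of adjacent segments and are unaffected, and the interpolation data in $\mathcal{P}$ constrain derivative values at the (shifted) knots, which are likewise unchanged. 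Running the argument in both directions shows the feasible sets and costs are in cost-preserving bijection, hence $J(t+c\mathbf{1})=J(t)$ for every $c$.

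Next I would record the elementary linear algebra relating $t$ and $d$. Let $D\in\mathbb{R}^{l\times(l+1)}$ be the half-difference operator $(Dt)_i=(\tau_i-\tau_{i-1})/2$, so that $d=Dt$. A direct computation gives $DR=I_l$ and $RDt=t-\tau_0\mathbf{1}$; that is, $RD$ is the projection onto the image of $R$ (the hyperplane $\tau_0=0$) along the shift direction $\mathbf{1}$. Combining this with shift invariance yields the identity I will use throughout:
\begin{equation*}
J(t)=J(RDt)=J(Rd), \qquad d=Dt.
\end{equation*}
Both $R$ and $D$ are bounded linear maps, and they match feasibility: the constraints $\tau_{i-1}<\tau_i$ are equivalent to $(Dt)_i>0$, and for any $d$ with $\delta_i>0$ the point $Rd$ satisfies the constraints of \eqref{eq::form3}. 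Also $DRd^\star=d^\star$, so $t^\star=Rd^\star$ obeys $Dt^\star=d^\star$.

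With these in place the two implications are short. For the ``if'' direction, suppose $t^\star=Rd^\star$ is a local minimiser of \eqref{eq::form3}. For $d$ feasible for \eqref{eq::form6} and close to $d^\star$, the point $Rd$ is feasible for \eqref{eq::form3} and, by continuity of $R$, close to $Rd^\star=t^\star$; hence $J(Rd)\ge J(t^\star)=J(Rd^\star)$, which is exactly local minimality of $d^\star$ for the objective $d\mapsto J(Rd)$. For the ``only if'' direction, suppose $d^\star$ is a local minimiser of \eqref{eq::form6}. For $t$ feasible for \eqref{eq::form3} and close to $t^\star$, set $d=Dt$; by continuity of $D$ this is close to $Dt^\star=d^\star$ and is feasible for \eqref{eq::form6}. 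Then $J(t)=J(Rd)\ge J(Rd^\star)=J(t^\star)$, the inequality being local minimality of $d^\star$ and the outer equalities being the shift-invariance identity, giving local minimality of $t^\star$.

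The hard part is conceptual rather than computational and lives entirely in the ``only if'' direction: because $R$ is not surjective, a competitor $t$ near $t^\star$ may differ from the image of $R$ by an arbitrary shift $\tau_0\mathbf{1}$, and a priori such a move could decrease $J$. Shift invariance is precisely what rules this out, by making $\mathbf{1}$ a flat direction of $J$ and thereby collapsing every nearby $t$ onto its half-duration vector $d=Dt$ without changing the cost. I would therefore present the shift-invariance step carefully, since it is the one nontrivial ingredient; the remaining neighbourhood-chasing is routine given continuity of the linear maps $R$ and $D$ and the equivalence of the positivity constraints.
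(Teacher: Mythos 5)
Your proof is correct and takes essentially the same route as the paper's: the easy direction rests on continuity of the linear map $R$, and the hard direction rests on the fact that $J$ depends only on the time differences (your shift invariance), which is exactly the "useful fact" invoked in the paper's appendix. The only differences are cosmetic --- you argue directly rather than by contradiction and make the half-difference operator $D$ explicit (and note the trivial sign slip: a spline feasible for $t+c\mathbf{1}$ is carried to one feasible for $t$ by $\tau\mapsto\sigma(\tau+c)$, not $\sigma(\tau-c)$).
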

\begin{proof}
See Appendix~\ref{app::variable}.
\end{proof}

We solve \eqref{eq::form6} using a projected steepest descent method that we refer to as Algorithm~\ref{algo2}. The iteration of Algorithm~\ref{algo2} is
\begin{align}\label{eq::iteration}
    d_{i+1}=\mathbf{P}_+\big(d_i-\alpha_i \nabla_d J(R d_i)\big),
\end{align}
where the $d_i\in\mathbb{R}^{l}$ are the sequence of iterates and $\alpha_i\in\mathbb{R}$ are step sizes as chosen by a backtracking line search algorithm for integer $i\geq1$. We note that every iteration calculated with \eqref{eq::iteration} yields a feasible solution to \eqref{eq::form6} because the projection $\mathbf{P}_+$ renders the components of $d_{i+1}$ positive.
% \iman[inline]{Where is the projection step? Also, in the algorithms, it is undefined. % Declan: It's defined in the notation. }

Since $J(t)$ is the optimal cost of a QP with equality constraints, it has a closed form expression~\cite{sun2006optimization} that we use to calculate  $\nabla_d J(Rd)$ directly. We then exploit sparsity to derive expressions for $\nabla_d J(Rd)$ that are efficient in terms the computational cost of their calculation, which are presented in the following proposition. 

\begin{prop}\label{prop::gradient}
For a given $t$, let $a^\star$ and $f^\star=\mathrm{vec}\{f_1^\star, \ldots, f_l^\star\}$ solve \eqref{eq::form2}. The $i$th component of the Jacobian, $\big (\nabla_d J(Rd) \big )_i$, $i=1,\ldots,l$, is given by 
\begin{align}\label{eq::partialJ}
    & \big(\nabla_d J(Rd))_i = (2k-1)\delta_i^{1-2k}\big(\delta_i^{-1}{f_i^\star}^T H_i(-1,1) f_i^\star \\
    & \quad -{f_i^\star}^T H_i(-1,1) F(\delta_i) f_i^\star -(f_i^\star-\overline{f}_i)^T H_i(-1,1)(f_i^\star-\overline{f}_i) \nonumber \\
    & \quad -(f_i^\star-\overline{f}_i)^T \big( H_i(-1,1) F(\delta_i) + F(\delta_i) H_i(-1,1) \big) \overline{f}_i\big), \nonumber
\end{align}
with $F(\delta)= \delta^{-1}\mathrm{diag}\{0, 1, \ldots,k-1, 0, 1, \ldots,k-1\}\in\mathbb{R}^{2k\times 2k}$.
\end{prop}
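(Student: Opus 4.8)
The plan is to start from the closed-form value of the equality-constrained QP in Problem~\ref{prob1} and differentiate it one block at a time. Since each $V_i$ is square and invertible, eliminating $a$ through \eqref{eq::form2::vand} by $a=V(t)^{-1}f$ turns the cost \eqref{eq::form2::cost} into $f^{T}M(t)f$ with the block-diagonal $M(t)=\mathrm{diag}\{M_1,\ldots,M_l\}$, $M_i=V_i^{-T}H_iV_i^{-1}$, minimised over the feasible set $\{f:Ef=0,\ Pf=b\}$. The conceptual crux is that this feasible set is independent of $t$ (the matrices $E$, $P$ and the vector $b$ carry no $t$-dependence), so by the envelope theorem the implicit dependence of the optimiser $f^\star$ on the times contributes nothing to first order and
\[
\frac{\partial J}{\partial \delta_i}=f_i^{\star T}\frac{\mathrm{d}M_i(\delta_i)}{\mathrm{d}\delta_i}f_i^\star .
\]
Equivalently, differentiating the closed-form expression $J=b^T(\cdots)^{-1}b$ directly and recognising $f^\star$ in the result gives the same thing; this is the step in which the stated closed form is used.

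Second, I would make the $\delta_i$-dependence of each block explicit using the affine substitution $\tau=\delta_i\rho+(\tau_i+\tau_{i-1})/2$ that underlies the nondimensional reformulation \eqref{eq::form_nondim} and Proposition~\ref{prop::scaling}. Under this substitution the integral defining $H_i$ scales by a fixed power of $\delta_i$, while the endpoint derivative values transform by the diagonal scaling matrix $G_i$; consequently the $\delta_i$-dependence of $M_i$ factors into a power of $\delta_i$ and the matrices $G_i$, multiplying fixed data built from $H_i(-1,1)$ and $V_i(-1,1)$. In particular $M_i$ is a function of $\delta_i$ alone (translation invariance), which is exactly the property that, via Proposition~\ref{prop::variable}, lets me write $J(Rd)=\sum_i f_i^{\star T}M_i(\delta_i)f_i^\star$ with the perturbation of $\delta_i$ touching only the $i$th block, so that $\big(\nabla_d J(Rd)\big)_i=\mathrm{d}\big(f_i^{\star T}M_i(\delta_i)f_i^\star\big)/\mathrm{d}\delta_i$.

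Third, I would carry out the differentiation and repackage. Applying the product rule to the $\delta_i$-power factor yields the prefactor $(2k-1)\delta_i^{1-2k}$ (up to sign), while differentiating $G_i$ produces the matrix $F(\delta_i)=\delta_i^{-1}\mathrm{diag}\{0,1,\ldots,k-1,0,1,\ldots,k-1\}$, which is precisely $\mathrm{d}G_i^{-1}/\mathrm{d}\delta_i$ re-expressed through $G_i$; the symmetry of the quadratic form produces the paired combination $H_i(-1,1)F(\delta_i)+F(\delta_i)H_i(-1,1)$. To reach the stated grouping I would then split the optimiser with the change of variables \eqref{eq::proj1}, writing $f_i^\star=\overline f_i+(f_i^\star-\overline f_i)=\overline f_i+Z_i g_i^\star$, and expand the bilinear form into the three groups $\overline f_i^T(\cdot)\overline f_i$, $(f_i^\star-\overline f_i)^T(\cdot)(f_i^\star-\overline f_i)$ and the cross term; since $\overline f_i$ is $\delta_i$-independent, these reproduce exactly the terms containing $\overline f_i$ in \eqref{eq::partialJ}.

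The main obstacle is the bookkeeping rather than any deep idea: tracking the powers of $\delta_i$ through the nondimensionalisation, correctly differentiating the block product involving $G_i$ and the fixed cost data, and verifying that after the split by $\overline f_i$ every term acquires the right sign and power to collapse onto \eqref{eq::partialJ}. The one point I would state carefully, because it is what keeps $\partial f^\star/\partial\delta_i$ out of the final expression, is the envelope argument of the first paragraph: the constancy of $E$, $P$ and $b$ in $t$ makes the derivative of $J$ equal to the partial derivative of the objective at the optimiser.
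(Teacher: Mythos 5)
Your proposal is correct and follows essentially the same route as the paper's proof: eliminate $a$ to obtain a quadratic form in $f$, use the first-order optimality of $f^\star$ so that its implicit dependence on the times drops out, differentiate the nondimensionalised blocks (producing the $\delta_i^{1-2k}$ power and $F(\delta_i)=\frac{\partial G_i}{\partial\delta_i}G_i^{-1}$), and regroup via the split $f_i^\star=\overline f_i+(f_i^\star-\overline f_i)$ from \eqref{eq::proj1}. The only cosmetic difference is that you invoke the envelope theorem directly on the constrained program (valid, since $E$, $P$ and $b$ are $t$-independent), whereas the paper first reduces to an unconstrained program via the null-space parametrisations $f=\overline f+Zg$ and $g=Yh$ and then applies the stationarity condition $2Q(d)h^\star+q(d)=0$ — two equivalent ways of discarding the $\partial f^\star/\partial\delta_i$ term.
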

\begin{proof}
See Appendix~\ref{app::gradient}.
\end{proof}

The component-wise expressions of \eqref{eq::partialJ} involve matrices of size at most $n \times n$, hence the calculation of $\nabla_d J(Rd)$ has computational complexity $O(ln^3)$. The minimising argument $f^\star$ is required to calculate \eqref{eq::partialJ}, which in itself requires $O(ln^3)$ computations. Hence, the computational complexity of calculating each iteration \eqref{eq::iteration} is $O(ln^3)$.

We also note because \eqref{eq::partialJ} was derived from \eqref{eq::form3}, it depends only on the difference between times $\delta_i$. As found in a previous work~\cite{burke2020}, the condition number of the matrices in \eqref{eq::partialJ} will not necessarily increase for large values of the times $\tau_i$. Instead, \eqref{eq::partialJ} is only prone to numerical errors for large values of $\delta_i$. Indeed, deriving similar expressions for $\nabla_d J(Rd)$ from \eqref{eq::form2} results in calculations involving ill-conditioned matrices that render the results practically useless due to numerical errors. 

\subsection{Numerical experiments}

\begin{table}[t]
    \centering
    \ra{1.3}
    \begin{tabular}{rrrr}\toprule
    & $l=6$ & $l=8$ & $l=10$ \\ \midrule
    Algorithm~\ref{algo2} & 2.31 & 2.48 & 2.45 \\
    Iteration \eqref{eq::zerod_approx} & 12.43 & 17.59 & 24.50 \\
    Iteration \eqref{eq::random_approx} & 39.52 & 232.74 & 996.70 \\
    \bottomrule \\
    \end{tabular} 
    \caption{Average number of $J$ evaluations for three methods over $100$ trials.} \label{tab::experiments}
\end{table}

In this subsection, we demonstrate another computational benefit of employing expression \eqref{eq::partialJ}. The benefit we look to highlight is in terms of the number of $J(t)$ evaluations required by a descent method to converge to a minimum. We compare our solution to two other methods popular in the literature by solving the variable-time problem and generating some minimum snap trajectories.

The first method we use for comparison is proposed by Mellinger and Kumar~\cite{Mellinger2011} and is chosen as it is easy to implement and is a conceptually simple derivative-free method. It uses a finite difference approximation of the gradient through the iteration, for $j\in\{1,\ldots,l\}$, 
\begin{align} \label{eq::zerod_approx}
    (d_{i+1})_j&=(d_i)_j-\beta_i \frac{J\big(R(d_i+\gamma_i e_j)\big)-J(Rd_i)}{\gamma_i}, 
\end{align}
where $\beta_i\in\mathbb{R}$ and $\gamma_i\in\mathbb{R}$ are sequences of parameters for integer $i\geq 1$. The vectors $e_j\in\mathbb{R}^{l}$ are such that $(e_j)_j=1$ and zero otherwise. The second method we use for comparison requires only one $J(t)$ evaluation per iteration. It is a random, derivative-free method with the iteration 
\begin{align}\label{eq::random_approx}
    d_{i+1}&=d_i-\varepsilon_i \frac{J\big(R(d_i+\zeta_i r)\big)-J(Rd_i)}{\zeta_i} r,
\end{align}
where $r\in\mathbb{R}^{l}$ is a random vector with a known Gaussian distribution and $\varepsilon_i\in\mathbb{R}$ and $\zeta_i\in\mathbb{R}$ are sequences of parameters. For for an analysis of the properties of random derivative-free methods and how to choose parameters such as the Gaussian distribution, we refer the reader to~\cite{nesterov2017random}.

% It scales poorly, however, and one evaluation of $J(t)$ is required for each dimension of the solution, for a total of $l+1$ $J(t)$ evaluations per iteration. We choose the approximation \eqref{eq::random_approx} as another point of comparison because it requires only one $J(t)$ evaluation per iteration.

The average results of $100$ trails, comparing the number of $J(t)$ evaluations required by three methods using the iterate updates \eqref{eq::iteration}, \eqref{eq::zerod_approx} and \eqref{eq::random_approx}, are summarised in Table~\ref{tab::experiments}. 
% \iman[inline]{How many experiments, are these numbers average numbers (I assume yes). Be precise.}
As expected, the exact gradient \eqref{eq::iteration} yields the least number of  $J(t)$ evaluations. The method that uses the finite-difference approximation \eqref{eq::zerod_approx} requires a similar number of iterations to that using the exact gradient \eqref{eq::iteration} but makes more $J(t)$ evaluations per iteration, evident from the experimental data that linearly increases in $l$. The expression \eqref{eq::random_approx} only requires one $J(t)$ evaluation per iteration, however, the total number of iterations required for convergence in using this update scales with the dimension of the solution and thus $l$. 

In summary, using the exact gradient in the update \eqref{eq::iteration} reduces the number of $J(t)$ evaluations through the calculation of each iterate and through the total number of iterations required for its convergence.

\section{Trajectory planning applications}\label{sec::experiments}

In this section, we present two applications of the algorithms presented thus far. The first is planning minimum snap trajectories for quadrotors and we demonstrate the real-time capability of our method by replanning a trajectory during an experimental flight. Second, we present a higher-level path planning algorithm that uses an RRT* algorithm to natively construct a tree of minimum snap trajectories. The computational complexity and well-conditioned formulation of our methods are critical to achieving the outcomes of each application.

\subsection{Minimum snap trajectory planning for quadrotors}

Our first application is an example of how our problem formulation and algorithms can be applied to Mellinger and Kumar's quadrotor trajectory planning methodology \cite{Mellinger2011}. This is achieved by adapting the variable-time problem to plan trajectories in multiple dimensions. We also demonstrate that our method can generate quadrotor trajectories in real-time.

We output trajectories in terms of the quadrotor's position $[\sigma_x(\tau), \sigma_y(\tau), \sigma_z(\tau)]^T\in\mathbb{R}^3$ and yaw $\sigma_\psi(\tau)\in[-\pi,\pi]$, such that $\sigma_x(\tau)$ and $\sigma_y(\tau)$ are two minimum snap splines and $\sigma_z(\tau)$ and $\sigma_\psi(\tau)$ are two minimum acceleration splines\footnote{This has the approximate effect of minimising the control effort required to track the trajectory (see \cite{Mellinger2011} for details).}. 
% \declan[inline]{Clarify $z$ is altitude and $x$ and $y$ are planar.}
We generate each spline by solving separate instances of the fixed-time problem with different parametrisations. To this end, we introduce the optimal costs of the four optimisation programs associated with each dimension of the trajectory. Let $J_x(t)$ and $J_y(t)$ be the optimal cost of two instances of \eqref{eq::form2} with $k=5$ % , i.e., $J_x(t)$ and $J_y(t)$ are the total snap of two minimum snap trajectories. 
and let $J_z(t)$ and $J_\psi(t)$ be the optimal cost of two instances of \eqref{eq::form2} with $k=3$. 
% \declan[inline]{Clarify different parametrisations.}
We solve a variant of the variable-time problem to find $\sigma_x(\tau),~\sigma_y(\tau),~\sigma_z(\tau)$ and $\sigma_\psi(\tau)$ such that the time between segments is the same for each spline.

\begin{prob}[Multi-Dimensional Variable-Time Problem]\label{prob3}
    Let $t^\star$ be the minimising argument of the optimisation program
    \begin{subequations} \label{eq::form5}
        \begin{align}
            \min_{t}\quad & J_x(t)+J_y(t)+J_z(t)+J_\psi(t), \label{eq::form5::cost} \\
            \text{s.t.} \quad & \tau_{i-1}<\tau_i, \quad i=1, \ldots,l.
        \end{align}
    \end{subequations}
    Find the splines $\sigma_x(\tau)$ and $\sigma_y(\tau)$ by solving two instances of the fixed time problem with $t=t^\star$ and $k=5$, and the splines $\sigma_z(\tau)$ and $\sigma_\psi(\tau)$ by solving two instances of the fixed time problem with $t=t^\star$ and $k=3$.
\end{prob}

We solve the multi-dimensional variable-time problem using a gradient descent method adapted from the approach taken in Section~\ref{sec::time_optimising}. This method is only trivially different to Algorithm~\ref{algo2}. 

\begin{figure}[t]
\centering
{\includegraphics[trim={0cm 0cm 0cm 0cm}, width=0.90\linewidth]{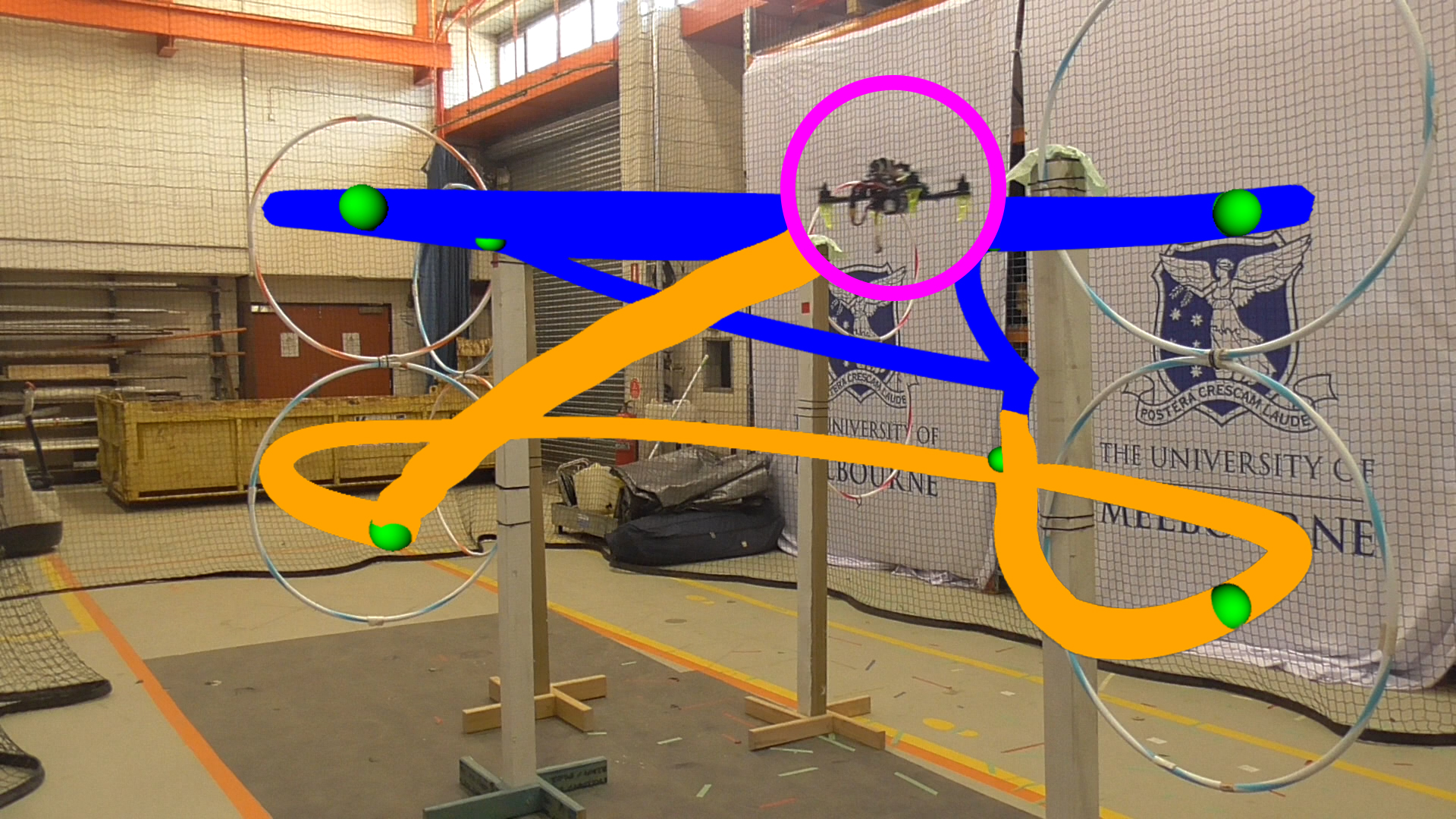}}
\caption{The quadrotor flight through an obstacle filled circuit. The flight has two phases: the offline trajectory (blue) and the recalculated, online trajectory (orange). \label{fig::quad_still}}
\end{figure}

\begin{figure}[t]
\centering
{\includegraphics[width=\linewidth]{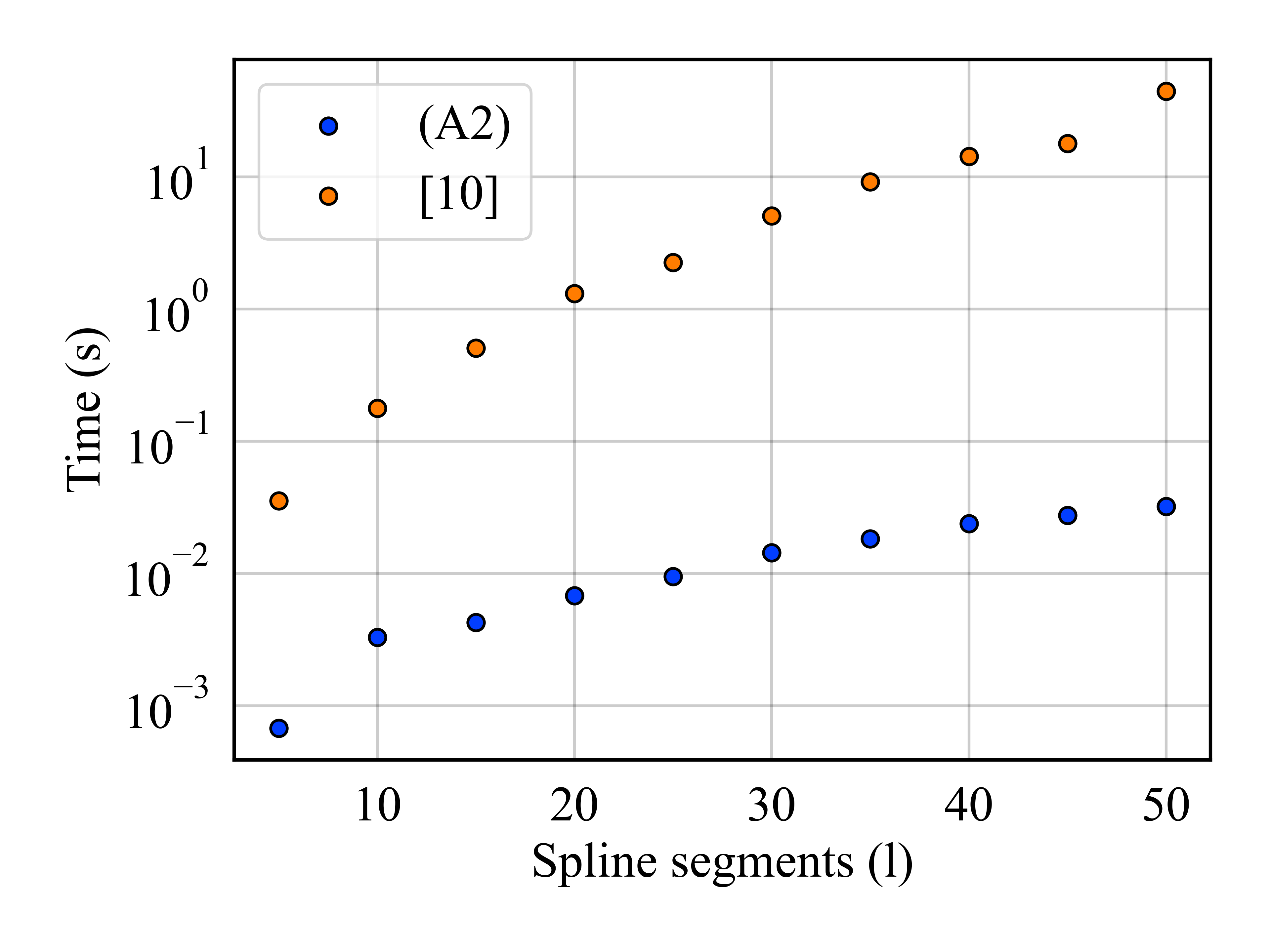}}
\caption{The time (s) taken to compute a range of trajectories by solving Problem~\ref{prob3} using an adapted Algorithm~\ref{algo2} (blue) and the method proposed by~\cite{Richter2016} as implemented in~\cite{eth_implementation} (orange). \label{fig::quad_timing} }
\end{figure}

\begin{figure}[h]
\centering
{\includegraphics[width=\linewidth]{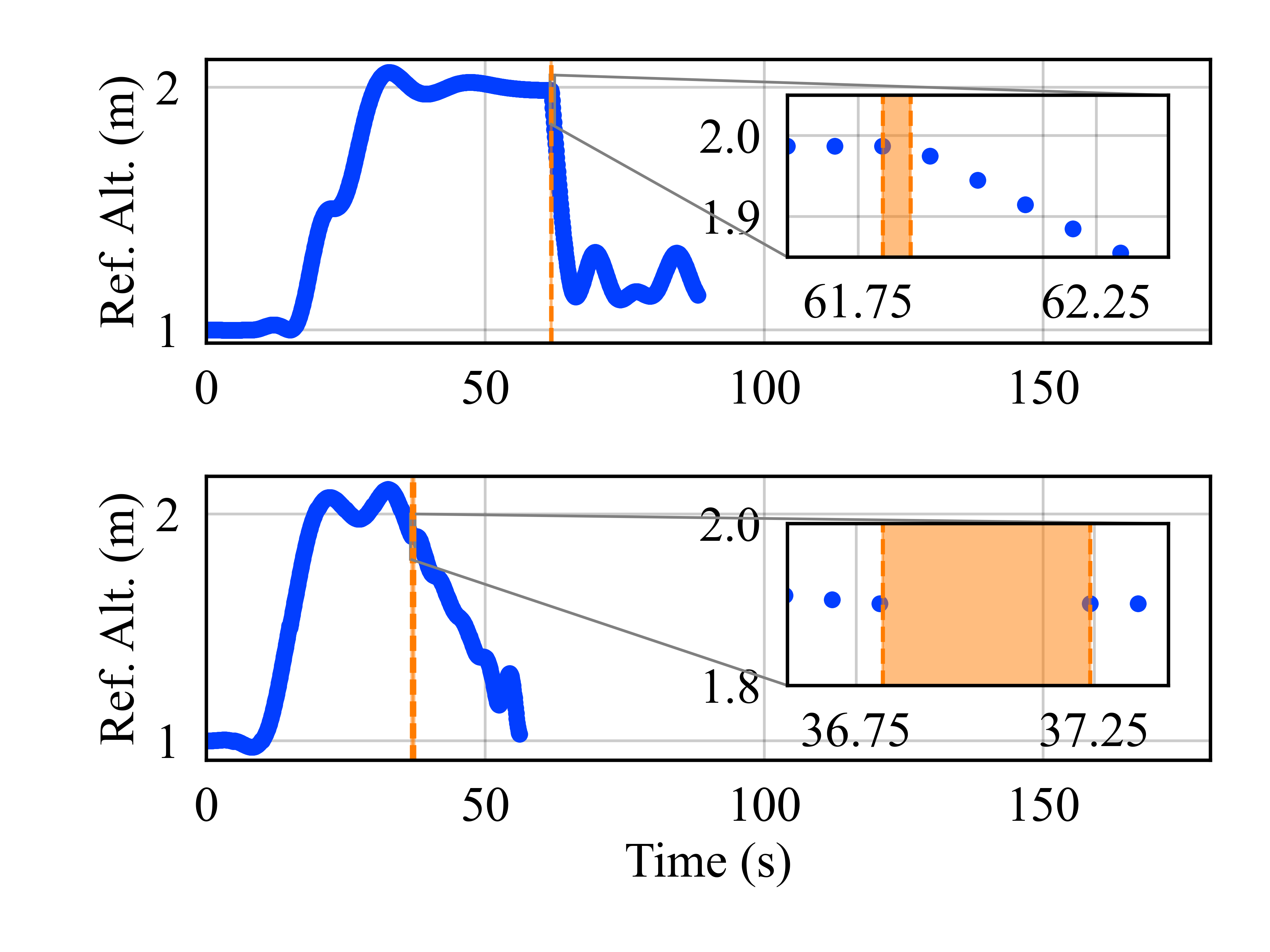}}
\caption{The discrete reference altitude of the quadrotor flight (blue) and the time taken to recalculate the online trajectory (orange). Using the adapted Algorithm~\ref{algo2} (top), the quadrotor tracks the offline trajectory through the upper hoops until $\tau=61.80$ s when it calculates a new trajectory in $75$ ms, before the next reference altitude update. Using the method proposed by~\cite{Richter2016} as implemented in~\cite{eth_implementation} (bottom), the new trajectory is calculated in $435$ ms, requiring the quadrotor to pause its flight. \label{fig::recalc}}
\end{figure}

To demonstrate the real-time capability of our algorithm, we conduct a quadrotor flight in which the trajectory is replanned midflight. The flight is through a circuit of four pairs of hoops stacked vertically on one another. Initially, the multi-dimensional variable-time problem is parametrised such that the quadrotor circumnavigates the course by passing through the upper hoops. Then, prompted by user input at an unspecified time, a new parametrisation of the multi-dimensional variable-time problem is solved to generate a trajectory that passes through the lower hoops. The two trajectories are visualised in the flight space in Fig.~\ref{fig::quad_still}. We benchmark our method by repeating the flight using Richter {\it et al.}'s solution of the variable-time problem~\cite{Richter2016} as implemented by the Autonomous Systems Lab of ETH Zurich~\cite{burri2015real}. This implementation is also a gradient descent method and we note that it utilises sparse matrix operations to perform calculations.

Trajectory generation was performed with C++ implementations of each algorithm and executed on a laptop with an Intel Core i7-8650U CPU running at 1.9 GHz, with 16 GB of RAM. The quadrotor we use for testing is assembled from generic components and a Pixhawk2 The Cube Black flight controller. The flight controller and a Vicon motion capture system are used for sensing capabilities. We use PX4 firmware~\cite{PX4DevTeam2020} for control and estimation. 

The computation time required for calculating a range of offline trajectories is shown in Fig.~\ref{fig::quad_timing}. Under our method, it took $0.43$ ms to generate a trajectory for one lap of the circuit and $5.3$ ms to generate one for five laps. Evidently Algorithm~\ref{algo2} is faster than the solver from~\cite{eth_implementation}, which generated a single-lap trajectory in $21$ ms and a five-lap trajectory in $0.91$ s. 

The timing of the trajectory replanning event is captured in Fig.~\ref{fig::recalc}, where the time taken to calculate the new trajectory is plotted in orange against the reference altitude updates in blue. The inset plot shows that, for each flight, the trajectory is recalculated immediately after the reference altitude is updated. Under our method, corresponding to Fig.~\ref{fig::recalc}, the new trajectory is calculated in approximately $75$ ms before the next reference position update. Thus the new trajectory is able to be broadcast without affecting the reference updates and quadrotor flight. In comparison, when using the solver from~\cite{eth_implementation}, the new trajectory is calculated in approximately $435$ ms. This exceeds the reference position update period and the quadrotor is forced to pause its flight. 

Finally, we note that the trajectories generated under the two methods are different (see Fig.~\ref{fig::recalc}). Additional interpolation constraints were employed under our method for safety purposes. The solver from~\cite{eth_implementation} was not able to compute a feasible trajectory with these additional constraints hence they were not included. In terms of the comparison, the additional constraints result in spline trajectories with more segments, thus increasing the computation times under our method and favouring to the benchmark.

\subsection{Minimum Snap Spline RRT* path planning}

Our second application is an example of how the methods presented thus far can be incorporated in an RRT* algorithm, to constructs minimum snap trajectories. This version version of the RRT* algorithm, titled \emph{Minimum Snap Spline RRT*}, is presented in Algorithm~\ref{algo3}
At each iteration of Algorithm~\ref{algo3}, a sample from free space is either added to the graph or discarded, depending on the minimum snap trajectory through the graph to the sample. We start with the problem setup and define the problem used to generate minimum snap trajectories, then we present Algorithm~\ref{algo3} and some simulation results.

To describe the problem space, let $\mathcal{X}\subseteq\mathbb{R}^2$ be the configuration space, $\mathcal{X}_\mathrm{obs}\subset\mathcal{X}$ be the obstacle space and $\mathcal{X}_\mathrm{free}=\mathcal{X}\backslash\mathcal{X}_\mathrm{obs}$ be the free space. We note that the problem space is formulated in two dimensions for simplicity but to generalise subsequent results to higher dimensions is trivial. Next, we introduce some graph theoretic notation. Let $\mathcal{G}=(\mathcal{V}, \mathcal{E})$ be a tree graph with vertices $\mathcal{V}=\{1,\ldots,n_g\}$ and edges $\mathcal{E}\subset\mathcal{V}\times\mathcal{V}$. A configuration is the collection of points $v=\mathrm{vec}\{v_1,\ldots,v_{n_g}\}\in\mathbb{R}^{2{n_g}}$ where $v_i\in\mathbb{R}^2$ for $i=1,\ldots,n_g$. A framework is the embedding of the graph $\mathcal{G}$ in two-dimensional space denoted by the pair $(\mathcal{G},v)$, that is, for each $i\in\mathcal{V}$ there exists a $v_i\in\mathbb{R}^2$. % Let $v_0\in\mathbb{R}^2$ be the root of $\mathcal{G}$.

% To describe the subproblem, let $i_h$ be a vertex in the framework $(\mathcal{G},v)$ and let $(i_1,\ldots,i_h)$ be the path from the root of the graph $i_1=1$ to $i_h$ with $i_j\in\mathcal{V}$, $j\in\{1,\dots,h\}$. Further, let $u\in \mathcal{X}_\mathrm{free}$ be a sample from free space. The subproblem outputs a trajectory in the two-dimensional position $[\sigma_1(\tau),\sigma_2(\tau)]^T\in\mathbb{R}^2$, where $\sigma_1(\tau)$ and $\sigma_2(\tau)$ are two minimum snap splines satisfying the interpolation constraints
% \begin{subequations} \label{eq::rrt_cons}
%     \begin{align}
%         [\sigma_1(\tau_{j}),\sigma_2(\tau_{j})]^T &= v_{i_{j+1}}, \quad j=0,\ldots,h-1, \\
%         [\sigma_1(\tau_h),\sigma_2(\tau_h)]^T &= u.
%     \end{align}
% \end{subequations}
% From the constraints \eqref{eq::rrt_cons} we follow the constructive process outlined in Section \ref{sec::formulation} to formulate two instances of \eqref{eq::form2}. In this way, the subproblem is completely characterised by the parameters $\{\mathcal{X}_\mathrm{free}, (\mathcal{G},v), i_h, u\}$.

To develop a minimum snap aware RRT* algorithm, the following problem needs to be solved for any additional sample point.

\begin{prob}[Minimum Snap RRT* Subproblem]\label{prob4}
    For given parameters $\{\mathcal{X}_\mathrm{free}, (\mathcal{G},v), i_h, u\}$, where  $\mathcal{X}_\mathrm{free}\subseteq\mathbb{R}^2$,
     $(\mathcal{G},v)$ is a framework,
      $i_h$ is a vertex in $\mathcal{G}$, and
        $u\in\mathcal{X}_\mathrm{free}$,
    let $(i_1,\ldots,i_h)$ be the path from the root of the graph $i_1=1$ to $i_h$ with $i_j\in\mathcal{V}$, $j\in\{1,\dots,h\}$. Consider two instances of the fixed-time problem with constraints 
    %% OPTION A: Separate constraints for two different optimisation programs
    \begin{align*}
        \sigma_1(\tau_{j}) &= (v_{i_{j+1}})_1, \quad j=0,\ldots,h-1, \\
        \sigma_1(\tau_l) &= (u)_1,
        \intertext{and}
        \sigma_2(\tau_{j}) &= (v_{i_{j+1}})_2, \quad j=0,\ldots,h-1, \\
        \sigma_2(\tau_l) &= (u)_2,
    \end{align*}
    %% OPTION B: (Combined) equations from which two sets of constraints are derived.
    % \begin{subequations} \label{eq::rrt_cons}
    %     \begin{align}
    %         [\sigma_1(\tau_{j}),\sigma_2(\tau_{j})]^T &= v_{i_{j+1}}, \quad j=0,\ldots,h-1, \\
    %         [\sigma_1(\tau_h),\sigma_2(\tau_h)]^T &= u.
    %     \end{align}
    % \end{subequations}
    and let $J_1(t)$ and $J_2(t)$ be their corresponding optimal costs, respectively. Let $t^\star$ be the minimising argument of the optimisation program
        \begin{align*}
            \min_{t}\quad & J_1(t)+J_2(t), \\
            \text{s.t.} \quad & \tau_{i-1}<\tau_i, \quad i=1, \ldots,h.
        \end{align*}
    Find the splines $\sigma_1(\tau)$ and $\sigma_2(\tau)$ by solving two instances of the fixed time problem with $t=t^\star$ and $k=5$.
\end{prob}

We note that from the constraints in Problem \ref{prob4} the constructive process outlined in Section \ref{sec::formulation} is used to formulate two instances of \eqref{eq::form2}. % In this way, the subproblem is completely characterised by the parameters $\{\mathcal{X}_\mathrm{free}, (\mathcal{G},v), i_h, u\}$.

Following from Problem~\ref{prob4}, we now define two procedures used in Algorithm~\ref{algo3}. To ease notation, we explicitly represent their functions in terms of $i_h$ and $u$ and note that $\mathcal{X}_\mathrm{free}$ and $(\mathcal{G},v)$ are implicit parameters. The procedures are as follows.

\begin{itemize}
    \item {\bf Collision checking:} For given parameters $\{\mathcal{X}_\mathrm{free}, (\mathcal{G},v), i_h, u\}$, the following Boolean function is defined
\end{itemize}
\begin{align*}
    \mathrm{CollisionFree}(i_h, u) &= \begin{cases}
    \mathrm{True} & \begin{aligned}
        & \text{if }[\sigma_1(\tau),\sigma_2(\tau)]^T\in\mathcal{X}_\mathrm{free}  \\
        & \text{ for all }\tau^\star_0\leq \tau \leq \tau^\star_l,
    \end{aligned} \\
    \mathrm{False} & \text{otherwise,}
    \end{cases}
\end{align*}
     where $\sigma_1(\tau)$, $\sigma_2(\tau)$ and $t^\star=[\tau_0^\star,\ldots,\tau_l^\star]^T$ are found by solving Problem~\ref{prob4} with $\{\mathcal{X}_\mathrm{free}, (\mathcal{G},v), i_h, u\}$.
\begin{itemize}
    \item {\bf Snap evaluation:} For given parameters $\{\mathcal{X}_\mathrm{free}, (\mathcal{G},v), i_h, u\}$, the following real valued function is defined
\end{itemize}
\begin{align*}
    \mathrm{Cost}(i_h, u) &= \int_{\tau^\star_0}^{\tau^\star_l} \big(\sigma^{(4)}_1(\tau)\big)^2 + \big(\sigma^{(4)}_2(\tau)\big)^2d\tau,
\end{align*}
     where $\sigma_1(\tau)$, $\sigma_2(\tau)$ and $t^\star=[\tau_0^\star,\ldots,\tau_l^\star]^T$ are found by solving Problem~\ref{prob4} with $\{\mathcal{X}_\mathrm{free}, (\mathcal{G},v), i_h, u\}$.

We are now ready to present Algorithm~\ref{algo3}, which is based on the RRT* algorithm implementation in \cite{karaman2011sampling}. The algorithm is initialised with a single vertex $v_1$ and proceeds to construct a framework. At each iteration, the algorithm performs the following coarse steps:

\begin{algorithm}[t]
\SetAlgoLined
$\mathcal{V}\leftarrow \{1\};~\mathcal{E}\leftarrow \varnothing;~v\leftarrow v_1$\;
\While{not terminated}{
    Sample $v_\mathrm{rand}$ from $\mathcal{X}_\mathrm{free}$\; \label{alg::sample}
    $i_\mathrm{init}\leftarrow \{i\in\mathcal{V} | \arg \min_{i} \lVert v_i-v_\mathrm{rand}\rVert_2$\}\; \label{alg::nearest}
    \If{$\mathrm{CollisionFree}(i_\mathrm{init}, v_\mathrm{rand})$}{ \label{alg::nearest_check}
        $\mathcal{V}\leftarrow \mathcal{V}\cup\{\lvert \mathcal{V}\rvert +1 \}$\; \label{alg::add_v_1}
        $v\leftarrow \mathrm{vec}\{v, v_\mathrm{rand}\}$\; \label{alg::add_v_2}
        $i_\mathrm{min}\leftarrow i_\mathrm{init}$\;
        $J_\mathrm{min}\leftarrow \mathrm{Cost}(i_\mathrm{init}, v_\mathrm{rand})$\;
        $\mathcal{V}_\mathrm{near}\leftarrow\{i\in\mathcal{V}|\lVert v_i - v_\mathrm{rand} \rVert_2 < \varepsilon\}$\;
        % Find $\mathcal{V}_\mathrm{near}\subset\mathcal{V}$ within some radius of $v_\mathrm{rand}$\;
        \For{$i_\mathrm{near}\in \mathcal{V}_\mathrm{near}$}{ \label{alg::near_1}
         \If{$\mathrm{CollisionFree}(i_\mathrm{near}, v_\mathrm{rand})$ and  ($\mathrm{Cost}( i_\mathrm{near}, v_\mathrm{rand}) < J_\mathrm{min}$) \label{alg::all_v_2}} {
                $i_\mathrm{min}\leftarrow i_\mathrm{near}$\;
                $J_\mathrm{min}\leftarrow \mathrm{Cost}( i_\mathrm{near}, v_\mathrm{rand})$\;
                }
            } \label{alg::near_2}
        % $\mathcal{E}\leftarrow \mathcal{E}\cup\{(v_\mathrm{min},v_\mathrm{rand})\}$\;
        $\mathcal{E}\leftarrow \mathcal{E}\cup\{(i_\mathrm{min},\lvert \mathcal{V} \rvert)\}$\; \label{alg::edge}
        \For{$i_\mathrm{near}\in \mathcal{V}_\mathrm{near}$}{ \label{alg::rewire_1}
            $i_\mathrm{par}\leftarrow \{i\in\mathcal{V} | (i,i_\mathrm{near})\in\mathcal{E}\}$\;
            % $v_\mathrm{parent}\leftarrow \{v | (v, v_\mathrm{near})\in\mathcal{E}\}$\;
            $J_\mathrm{near}\leftarrow \mathrm{Cost}(i_\mathrm{par}, v_{i_\mathrm{near}})$\;
            \If {$\mathrm{CollisionFree}(\lvert \mathcal{V} \rvert, v_{i_\mathrm{near}})$ and ($\mathrm{Cost}( \lvert \mathcal{V} \rvert, v_{i_\mathrm{near}}) < J_\mathrm{near}$) \label{alg::all_v_3}} {
                $\mathcal{E}\leftarrow \mathcal{E}\backslash\{(i_\mathrm{par}, i_\mathrm{near})\}$\; 
                $\mathcal{E}\leftarrow \mathcal{E}\cup\{(\lvert \mathcal{V} \rvert, i_\mathrm{near})\}$\; \label{alg::rewire_2}
            }
        }
    }
}
Return $((\mathcal{V},\mathcal{E}),v)$\;
\caption{Minimum Snap Spline RRT*} \label{algo3}
\end{algorithm}
\begin{figure}[h!]
\centering
{\includegraphics[width=0.98\linewidth]{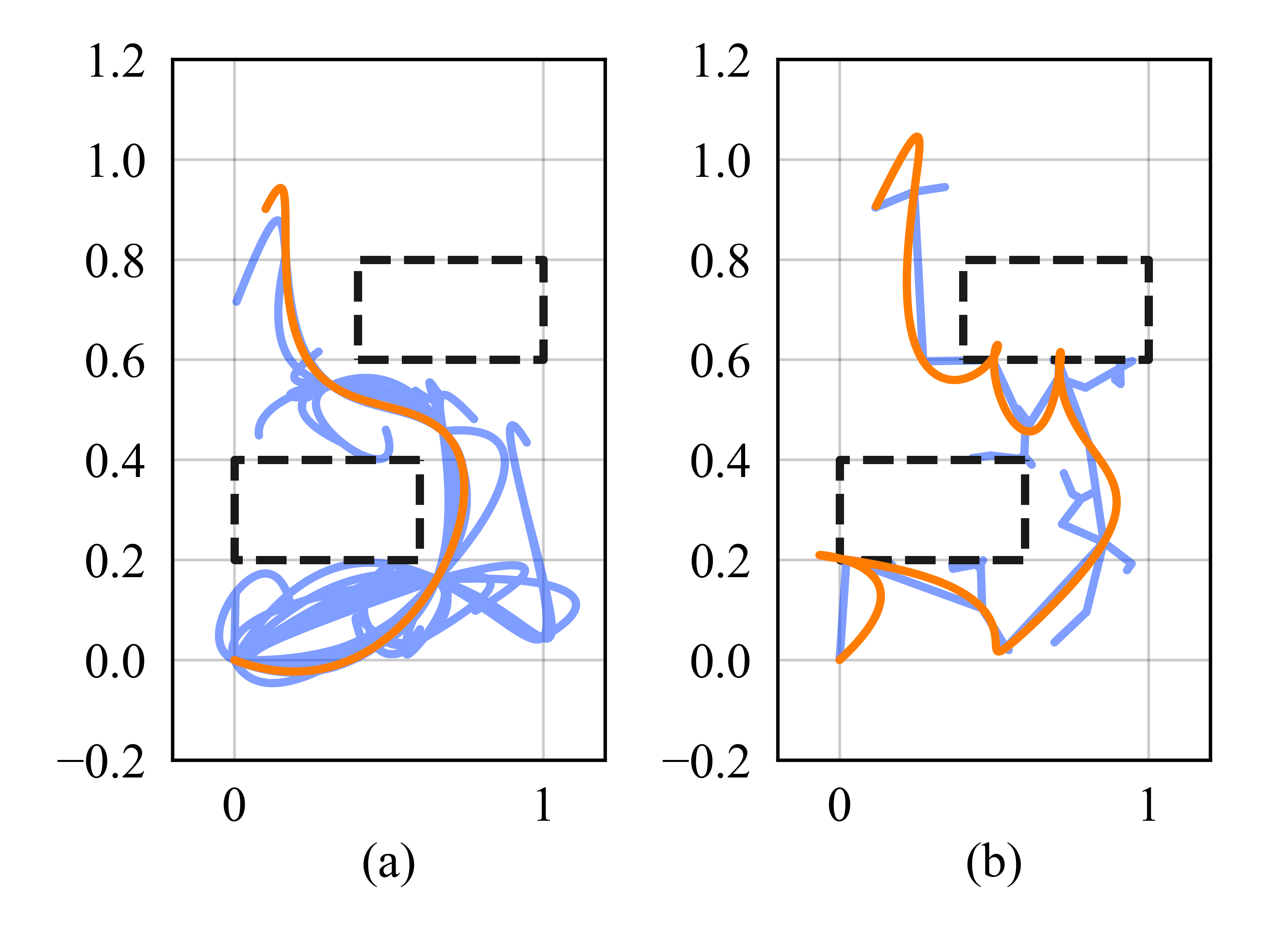}}
\caption{Minimum snap trajectories generated using two RRT* algorithms: (a) path cost is the total snap of a minimum snap spline through vertices; and (b) path cost is Euclidean distance between vertices, which is used to generate a minimum snap spline. Pictured is the underlying tree (solid blue), final trajectory (solid orange) and obstacles (dashed black).}
\label{fig::rrt} 
\end{figure}

\begin{enumerate}
    \item {\bf Sample:} A point $v_\mathrm{rand}$ is randomly sampled (Line \ref{alg::sample}).
    \item {\bf Add sample to framework:} Find $i_\mathrm{init}$, the vertex corresponding to the nearest point in the framework to $v_\mathrm{rand}$ (Line \ref{alg::nearest}). If a collision-free path exists from the root $v_1$ to $v_\mathrm{rand}$ through $v_{i_\mathrm{init}}$, then $v_\mathrm{rand}$ is added to the framework with the corresponding vertex $i_{\lvert \mathcal{V} \rvert}$ (Lines \ref{alg::nearest_check}--\ref{alg::add_v_2}).
    \item {\bf Add edge to sample:} Compare the minimum snap trajectories from the root $v_1$ to $v_\mathrm{rand}$ through vertices corresponding to points near the sample $\mathcal{V}_\mathrm{near}$ (Lines \ref{alg::near_1}--\ref{alg::near_2}). Find the point $i_\mathrm{min}\in\mathcal{V}_\mathrm{near}$ corresponding to the lowest snap trajectory and add an edge between $i_\mathrm{min}$ and the sample $i_{\lvert \mathcal{V} \rvert}$ (Line \ref{alg::edge}).
    \item {\bf Rewire edges:} Check if lower snap trajectories exist through the sample $v_\mathrm{rand}$ to nearby points in the framework then add and remove edges if so (Lines \ref{alg::rewire_1}--\ref{alg::rewire_2}).
\end{enumerate}

\begin{table}[t]
    \centering
    \ra{1.3}
    \begin{tabular}{rrr} \toprule
    & Total Snap & Computation Time (s)  \\ \midrule
    Algorithm~\ref{algo3} & 14 & 2.75 \\
    ED RRT* & 23 & 0.0123 \\
    \bottomrule \\
    \end{tabular} 
    \caption{Average total snap and average total computation time from a Monte Carlo simulation of $100$ trials of Algorithm~\ref{algo3} and the Euclidean distance (ED) RRT* algorithm.} \label{tab::rrt_metrics}
\end{table}

To provide some insight about the efficacy of Algorithm~\ref{algo3}, we generate minimum snap trajectories through one problem space. As a point of comparison, we also use the output of a Euclidean distance RRT* algorithm to create a minimum snap trajectory, which is a prototypical path planning approach~\cite{Richter2016},~\cite{burri2015real}. The Euclidean distance RRT* algorithm constructs a framework based on the straight-line path between vertices, that is, collision checking is performed over straight-line paths and the cost function is Euclidean distance. The trajectories generated under the two approaches are illustrated in Fig.~\ref{fig::rrt}, where the underlying tree is coloured in blue and the final trajectory in orange. Results from a Monte Carlo simulation of $100$ trials of the two methods is recorded in Table~\ref{tab::rrt_metrics}. The computational complexity of calculating the total snap of a trajectory is much greater than the Euclidean distance, which is reflected in the computation time of each method. However, this appears to be in a trade off with the total snap of the trajectory as Algorithm~\ref{algo3} generated trajectories with smaller total snap than the benchmark.

We conclude by highlighting a desirable feature of Algorithm~\ref{algo3}: the trajectories output by the algorithm inherit the properties of the collision checker. This is not the case for existing path planning methods that employ RRT* algorithms. For example, the Euclidean distance RRT* algorithm used as a benchmark does not take into account minimum snap trajectories and hence its output is prone to collisions (e.g., see Fig.~\ref{fig::rrt}). Algorithm~\ref{algo3}, however, executes the collision checking subroutine at each instance of adding a vertex or edge. Thus, the output trajectory is guaranteed to be collision free up to the properties of the collision checking subroutine. For example, we have assumed that $\mathrm{CollisionFree}(i_h,u)$ is a perfect collision checker thereby the output of Algorithm~\ref{algo3} is guaranteed to be collision free. % \footnote{For a more thorough consideration of collision checking, the authors recommend \cite{ImansPaper}.}. 
In this way, our method provides a native way of generating minimum snap trajectories.

% \declan[inline]{Discussion about collision free.
% % Assume perfect collision checker. Existing methods don't take into account the minimum snap trajectory, ours inherits the properties of the collision checker.
% % Our method is a native way of generating minimum snap trajectories.
% }

\section{Conclusion}
In this paper we present a general framework for creating optimal splines. We propose a well-conditioned formulation for an optimal spline generation problem and solve it using an algorithm with linear computational complexity in the number of segments in the spline trajectory. We leverage this algorithm to present a solution to another the optimisation problem used to generate optimal splines. We present expressions that reduce the computational complexity of optimizing the time between segments in the spline trajectory. We demonstrate the applicability of this general framework experimentally by generating trajectories for a quadrotor. We also highlight how our algorithms can be utilised as part of other trajectory planning approaches by proposing an RRT* algorithm that constructs trees of minimum snap splines.

In future work, we will explore the use of B-Splines for trajectory planning. The matrices that result from formulating constraints on B-Spline trajectories are typically banded and could permit further improvements in the computational complexity of solving the KKT conditions. 

\appendices
\section{Proof of Proposition~\ref{prop::alg_complexity}} \label{app::alg_complexity}

    % \begin{figure*}[!t]
    % 	\normalsize
    % 	%        \setcounter{MYtempeqncnt}{\value{equation}}
    % 	% Set the equation number to one less than the one
    % 	% desired for the first equation here.
    % 	% The value here will have to changed if equations
    % 	% are added or removed prior to the place these
    % 	% equations are referenced in the main text.
    % 	% \setcounter{equation}{5}
    	
    % 	\begin{align}
    % 	\begin{bmatrix}
    % 	D_1 &  -M^T & & & \\
    % 	-M & D_2 & -M^T & & \\
    % 	& -M & D_3& -M^T  &  \\
    % 	& & & \ddots & \\
    % 	& & & -M & D_l
    % 	\end{bmatrix}&=\begin{bmatrix}
    % 	I &   & &  & \\
    % 	N_1 & I & & & \\
    % 	& N_2 & I & & \\
    % 	& & & \ddots & & \\
    % 	& & & N_{k-1} & I
    % 	\end{bmatrix}\begin{bmatrix}
    % 	D_1 &  -M^T & & & \\
    % 	& \overline{D}_2 & -M^T & & \\
    % 	& & \overline{D}_3 & -M^T & \\
    % 	& & & \ddots & & \\
    % 	& & & & \overline{D}_l
    % 	\end{bmatrix} \label{eq::LU}
    % 	\end{align}
    % 	% % Restore the current equation number.
    % 	% \setcounter{equation}{\value{MYtempeqncnt}}
    % 	% IEEE uses as a separator
    % 	% \vspace{1em}
    % 	\hrulefill
    % 	% The spacer can be tweaked to stop underfull vboxes.
    % 	% \vspace*{1em}
    % \end{figure*}
    
    In this appendix we prove that Algorithm~\ref{algo1} solves the fixed-time problem. To reduce notation in this section, we will not explicitly state time dependence, i.e., we write $V(t),~H(t)$ as $V,~H$. We note that the fixed-time problem is solved for a known, constant $t$.
	
	We first present a useful set of recursions in the following lemma. The recursions efficiently solve a structured matrix equation that we will encounter in the principle proof.

    \begin{lemm}\label{lemm::recursions}
        For $i=1\ldots,k$, let $g_i^-$, $g_i^+$ and $\lambda_i$ satisfy 
        \vspace{-1em}
            \begin{subequations}\label{recs}
                \begin{align}
                g_i^-
                &=
                \overline{A}^{-1}_i(\overline{c}_{i}^- -B_ig_{i}^+), \label{rec1} \\
                g_i^+
                &=\begin{cases}
                \begin{aligned}
                & (C_l-B_l^T\overline{A}_l^{-1}B_l)^{-1}  \\
                & \quad (g_{l}^+-B_l^T\overline{A}_l^{-1}\overline{c}_{l}^-)
                \end{aligned} 
                & i=l, \\
                g_{i+1}^- & i=l-1, \ldots,1,
                \end{cases} \label{rec2} \\
                \lambda_i &= B_i^T g_{i}^- + C_i g_{i}^+ -c_i^+, \label{rec3}
                \end{align}
            \end{subequations}
            where
            \begin{subequations}
                \begin{align}
                \overline{A}_i &=\begin{cases}
                A_1 & i=1, \\
                A_{i}+C_{i-1}-B_{i-1}^T\overline{A}_{i-1}^{-1}B_{i-1} & i=2,\ldots,k,
                \end{cases} \label{con1}\\
                \overline{c}_{i}^-&=\begin{cases}
                {c}_{1}^- & i=1, \\
                \begin{aligned}
                & c_{i}^- + c_{i-1}^+ \\
                & \quad -B_{i-1}^T\overline{A}_{i-1}^{-1}\overline{c}_{i-1}^-
                \end{aligned}
                 & i=2,\ldots,k.
                \end{cases} \label{con2}
                \end{align} \label{con1and2}
            \end{subequations}
            Then \eqref{recs} solves 
            \begin{align}\label{eq::KKT}
            \begin{bmatrix}
            Z^TV^{-T}HV^{-1}Z & Z^TE^T \\
            EZ & 0
            \end{bmatrix}\begin{bmatrix}
            g \\ \lambda
            \end{bmatrix} &= \begin{bmatrix}
            Z^TV^{-T}HV^{-1}\overline{f} \\ 0
            \end{bmatrix},
        \end{align}
        where $\lambda = [\lambda_1^T, \dots, \lambda_{l-1}^T]^T\in\mathbb{R}^{k(l-1)}$ with $\lambda_i\in\mathbb{R}^k$ for $i=1,\ldots,l-1$.
        \end{lemm}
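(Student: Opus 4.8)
The plan is to recognise \eqref{eq::KKT} as the block saddle-point system of the reduced quadratic program obtained from Problem~\ref{prob1} after the change of variables \eqref{eq::proj1} together with $a=V^{-1}f$, and then to show that the accumulations \eqref{con1and2} and the recursions \eqref{recs} are precisely the forward and backward sweeps of a block-tridiagonal elimination for it. Since $H$, $V$ and $Z$ are all block diagonal, the Hessian block $Z^TV^{-T}HV^{-1}Z$ splits into the $2\times2$ partitioned blocks of \eqref{eq::partition} and the right-hand side into the pairs $c_i^-,c_i^+$; the only inter-segment coupling is through the banded continuity matrix $E$, whose $i$th block row relates the $+$-side of segment $i$ to the $-$-side of segment $i+1$.

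First I would expose this structure explicitly. Writing $Z_i$ in its selection form, which inserts the free start- and end-derivatives into $f_i=\overline f_i+Z_ig_i$, the feasibility rows $EZg=0$ reduce---under the consistency condition that an interior derivative pinned from one side is pinned to the same value from the other, so that $E\overline f=0$---to the linking identities $g_i^+=g_{i+1}^-$, while $\lambda_i$ is the multiplier enforcing continuity at the $i$th knot. The stationarity rows then read, for each $i$, $A_ig_i^-+B_ig_i^+-(\text{term in }\lambda_{i-1})=c_i^-$ and $B_i^Tg_i^-+C_ig_i^++(\text{term in }\lambda_i)=c_i^+$. Identifying $g_i^+$ with $g_{i+1}^-$ turns the primal part into a block-tridiagonal system in the ordered variables $g_1^-,g_1^+,g_2^+,\dots,g_l^+$, in which the shared interior variable carries the combined diagonal $C_{i-1}+A_i$ and off-diagonal couplings through the $B$ blocks.

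Next I would solve this by a block $LDL^T$ (block-Thomas) sweep. A forward induction eliminating $g_1^-,g_2^-,\dots$ in turn shows that after the first $i-1$ eliminations the pivot for $g_i^-$ is the Schur complement $A_i+C_{i-1}-B_{i-1}^T\overline A_{i-1}^{-1}B_{i-1}$ and the reduced right-hand side is $c_i^-+c_{i-1}^+-B_{i-1}^T\overline A_{i-1}^{-1}\overline c_{i-1}^-$; these are exactly \eqref{con1} and \eqref{con2}. The backward sweep solves the final pivot for $g_l^+$ via $(C_l-B_l^T\overline A_l^{-1}B_l)^{-1}(c_l^+-B_l^T\overline A_l^{-1}\overline c_l^-)$, propagates $g_i^+=g_{i+1}^-$ back along the chain, recovers each $g_i^-=\overline A_i^{-1}(\overline c_i^--B_ig_i^+)$, and finally reads the multipliers off the bottom rows as $\lambda_i=B_i^Tg_i^-+C_ig_i^+-c_i^+$---that is, \eqref{rec2}, \eqref{rec1} and \eqref{rec3}. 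To close the argument rigorously I would verify by induction that the forward-accumulated $\overline A_i,\overline c_i^-$ satisfy every intermediate block equation, noting that each $\overline A_i$ is nonsingular because the reduced Hessian is positive definite under the standard well-posedness of the interpolation problem, so all Schur complements and the inverses in \eqref{recs} are well defined and the solution is unique.

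The hard part is the bookkeeping at the interface between the primal linking $g_i^+=g_{i+1}^-$ and the dual variables $\lambda_i$: one must confirm that the selection matrices $Z_i$ appearing in $Z^TE^T\lambda$ collapse the raw continuity multiplier into the clean $\lambda_i$ terms of \eqref{rec3}, which hinges on interior pinned derivatives being symmetric across a knot, and one must track sign and index conventions so that the combined diagonal $C_{i-1}+A_i$ and the telescoping accumulation of $\overline c_i^-$ emerge exactly as in \eqref{con1and2}. Once the block-tridiagonal form is in place, the Schur-complement induction itself is routine.
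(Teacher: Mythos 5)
Your proposal is correct and follows essentially the same route as the paper: both identify the KKT system \eqref{eq::KKT} as block-tridiagonal once the block-diagonal structure of $Z^TV^{-T}HV^{-1}Z$ and the banded coupling through $E$ are exposed, and both obtain \eqref{con1and2} and \eqref{recs} as the forward (Schur-complement) and backward sweeps of a block LU/Thomas elimination. The only cosmetic difference is that you eliminate the linking identities $g_i^+=g_{i+1}^-$ and the multipliers $\lambda_i$ before forming the tridiagonal system, whereas the paper keeps them inside enlarged $3\times 3$ diagonal blocks $D_i$ and factors that larger system directly.
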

        % \vspace{1em}
        \begin{proof}
            Under the partition \eqref{eq::partition}, we may permute the variables and columns of \eqref{eq::KKT} to reveal the block-tridiagonal structure
        \begin{align}\label{tridiag}
            \underbrace{ \begin{bmatrix}
            D_1 &  -M^T & & \\
            -M & D_2 & -M^T & \\
            & -M & D_3 & \hdots\\
            & & \vdots & \ddots
            \end{bmatrix}}_{D}\begin{bmatrix}
            y_1 \\ y_2 \\ y_3 \\ \vdots 
            \end{bmatrix} &= \begin{bmatrix}
            d_1 \\ d_2 \\ d_3 \\ \vdots 
            \end{bmatrix}, 
            \end{align}
            where 
            \begin{align*}
                D_i=\begin{bmatrix}
                A_i & B_i & 0 \\
                B_i^T & C_i & I \\
                0 & I & 0
                \end{bmatrix}, \quad
                M= \begin{bmatrix}
                0 & 0 & I \\
                0 & 0 & 0 \\
                0 & 0 & 0
                \end{bmatrix},
        \end{align*}
        with $y_i=[{g_{i}^-}^T, {g_{i}^+}^T, \lambda_i^T]^T$ and $d_i=[{c_{i}^-}^T, {c_{i}^+}^T, 0]^T$ for $i=1,\ldots,l-1$ and $y_l=[{g_{l}^-}^T, {g_{l}^+}^T]^T$ and $d_l=[{c_{l}^-}^T, {c_{l}^+}^T]^T$ such that  $y=\mathrm{vec}\{y_1,\ldots,y_l\}$ and $d=\mathrm{vec}\{d_1,\ldots,d_l\}$.
        
        Block-tridiagonal matrices such as \eqref{tridiag} are commonly solved through Block LU factorisation~\cite{Fox1969}. The LU factorisation of the block tridiagonal matrix $D$ 
        \scalebox{0.85}{\parbox{1.1\linewidth}{%
			\begin{align*}
                D&=\underbrace{\begin{bmatrix}
            	I &   & & & \\
            	N_1 & I & & & \\
            	& N_2 & I & & \\
            	& & & \ddots & & \\
            	& & & N_{k-1} & I
            	\end{bmatrix}}_{L}\underbrace{\begin{bmatrix}
            	D_1 &  -M^T & & \\
            	& \overline{D}_2 & -M^T & & \\
            	& & \overline{D}_3 & & \\
            	& & & \ddots & & \\
            	& & & & \overline{D}_l
            	\end{bmatrix}}_{U}, \label{eq::LU}
            \end{align*}
		}}
        where the block elements are
        \begin{align*}
            N_i&=\begin{bmatrix}
            B_i^T\overline{A}_i^{-1} & -I  & C_i-B_i^T\overline{A}_i^{-1}B_i \\
            0 & 0 & 0  \\
            0 & 0 & 0  
            \end{bmatrix},\quad i=1,\ldots,l-1, \\
            \overline{D}_i&=\begin{bmatrix}
            A_i+C_{i-1}-B_{i-1}^T\overline{A}_{i-1}^{-1}B_{i-1} & B_i & 0 \\
            B_i^T & C_i & I \\
            0 & I & 0
            \end{bmatrix},\quad i=2,\ldots,l.
        \end{align*}
        First solving $Lx=d$ yields the iteration over $\overline{c}_{i}^-$, that is \eqref{con2}. Calculating each $A_i+C_{i-1}-B_{i-1}^T\overline{A}_{i-1}^{-1}B_{i-1}$ gives rise to \eqref{con1}. These matrices are then used to solving $Uy=x$, providing expressions for $g_{i}^-$ and $g_{i}^+$ as \eqref{rec1} and \eqref{rec2}. The solution to $Uy=x$ also governs the values of $\lambda_i$ with \eqref{rec3}.
        \end{proof}
        
        We prove that Algorithm~\ref{algo1} solves the fixed-time problem by first reformulating the optimisation program to reveal its structure. Then we derive the steps of the algorithm in a similar fashion to Cantoni {\it et al.}~\cite{cantoni2020structured}.
        
        \begin{proof}[Proof of Proposition~\ref{prop::alg_complexity}]
        Similar to Richter {\it et al.}~\cite{Richter2016}, we reformulate \eqref{eq::form2} as the following.
        \begin{subequations}\label{eq::reform1}
            \begin{align}
                \min_{f} \quad &f^TV^{-T}HV^{-1}f, \\
                \text{s.t. } \quad & Ef =0, \\
                \quad & P f = b.
            \end{align}
        \end{subequations}
        
        We make the change of variable $f = \overline{f} + Z g$ such that $P \overline{f}=b$ and  $PZ=0$. By construction $P f=P(\overline{f} + Z g)= b$, and the program resulting from the substitution is
        \begin{subequations} \label{eq::form4}
            \begin{align}
            \min_{g} \quad &2g^TZ^TV^{-T}HV^{-1}\overline{f} + g^TZ^TV^{-T}HV^{-1}Z g, \\
            \text{s.t. } \quad & E Z g =0. \label{eq::form4::constr}
        \end{align}
        \end{subequations}
        
        % Cast as \eqref{eq::form4}, the matrices in the program present a lot of structure. 
        Note that, the Hessian $Z^TV^{-T}HV^{-1}Z$ is block diagonal and the decision variables are coupled by the sparse matrix $E Z$. Furthermore, the coupling is only between the derivatives of adjacent segments. The recursions introduced in Lemma~\ref{lemm::recursions} lead to the steps of Algorithm~\ref{algo1} as described below. 
        
        The KKT conditions for \eqref{eq::form4} yield \eqref{eq::KKT}. Lemma~\ref{lemm::recursions} solves \eqref{eq::form4} with linear computational complexity in $l$. Through a change of variables, the recursions can be used to calculate the solution to \eqref{eq::form1}. There are $2l$ matrix calculations required in computing \eqref{con1and2}, while $2l$ systems of equations need to be solved in \eqref{rec1} and \eqref{rec2}. All the matrices involved are square with dimensions smaller than or equal to $k$. Hence, the computational complexity is $O(k^3l)$.
        % \begin{align*}
        %     \begin{bmatrix}
        %     Z^TV^{-T}HV^{-1}Z & Z^TE^T \\
        %     EZ & 0
        %     \end{bmatrix}\begin{bmatrix}
        %     f \\ \lambda
        %     \end{bmatrix} &= \begin{bmatrix}
        %     Z^TV^{-T}HV^{-1}\overline{f} \\ 0
        %     \end{bmatrix},
        % \end{align*}
        % where the Lagrangian multipliers are $\lambda = [\lambda_1^T, \dots, \lambda_{k-1}^T]^T$. % and $\lambda_i\in \mathbb{R}^s$. % We partition $g$ the same as $f$, $g=[g_{1,0}^T,g_{1,T}^T, \ldots,g_{k,0}^T,g_{k,T}^T]^T$.
        \end{proof}
    \section{Proof of Proposition~\ref{prop::scaling}}\label{app:scaling}
    The cost function \eqref{eq::form1::cost} can be written as the summation 
\begin{align*}
    J(t) &= \min_{\sigma(\tau),f}\sum_{i=1}^l \int_{\tau_{i-1}}^{\tau_i}\big(\sigma^{(k-1)}(\tau)\big)^2 d\tau,
    \intertext{Under the change of variables $\rho=2\tau/(\tau_i - \tau_{i-1}) - (\tau_i+\tau_{i-1})/(\tau_i - \tau_{i-1})$ for each integral in the summation}
    J(t) &= \min_{\widetilde{\sigma}(\tau),f}\sum_{i=1}^l \Big(\frac{2}{\tau_i-\tau_{i-1}}\Big)^{2k-1} \int_{-1}^{1}\big(\widetilde{\sigma}^{(k-1)}(\rho)\big)^2d\rho, \\
    &= \min_{\widetilde{a}_i,f}\sum_{i=1}^l \Big(\frac{2}{\tau_i-\tau_{i-1}}\Big)^{2k-1} \widetilde{a}_i^T H_i(1,-1) \widetilde{a}_i, \\
    &= \min_{\widetilde{a},f}\widetilde{a}^T \widetilde{H}(t) \widetilde{a}.
\end{align*}
% \begin{align*}
%     J(t) &= \min_{\widetilde{\sigma}(\tau),f}\sum_{i=1}^l \Big(\frac{2}{\tau_i-\tau_{i-1}}\Big)^{2k-1} \int_{-1}^{1}\widetilde{\sigma}_i(\rho)d\rho,
% \intertext{where $\widetilde{\sigma}_i(\rho)=\sigma_i((\tau_i-\tau_{i-1})\rho/2 + (\tau_i+\tau_{i-1})/2)$. Let $\widetilde{a}_i$ be the coefficients of the shifted and scaled polynomials $\widetilde{\sigma}_i(\rho)$ for $i=1,\ldots,l$ and $\widetilde{a}=\mathrm{vec}\{\widetilde{a}_1,\ldots, \widetilde{a}_l\}$. Thereby}
%     J(t) &= \min_{\widetilde{a}_i,f}\sum_{i=1}^l \Big(\frac{2}{\tau_i-\tau_{i-1}}\Big)^{2k-1} \widetilde{a}_i^T H_i(1,-1) \widetilde{a}_i, \\
%     &= \min_{\widetilde{a},f}\widetilde{a}^T \widetilde{H}(t) \widetilde{a}.
% \end{align*}
Therefore, subject to the constraints \eqref{eq::form2::vand}--\eqref{eq::form2::phi}, if $a^\star$ and $f^\star$ minimise \eqref{eq::form2::cost} then $\widetilde{a}^\star$ and $f^\star$ minimise \eqref{eq::form_nondim::cost}.
% Therefore, the cost functions \eqref{eq::form2::cost} and \eqref{eq::form_nondim::cost} are equal. 
We now show that the constraints \eqref{eq::form_nondim::vand}--\eqref{eq::form_nondim::const2} are equivalent to \eqref{eq::form2::vand}--\eqref{eq::form2::phi} to complete the proof. Under the change of variables, for $i=1,\ldots,l$ and $q=1,\ldots,k$,
\begin{align*}
    \sigma^{(q-1)}_i(\tau_{i-1})&=\Big(\frac{2}{\tau_i-\tau_{i-1}}\Big)^{q-1}\widetilde{\sigma}^{(q-1)}_i(-1), \\
    \sigma^{(q-1)}_i(\tau_{i})&=\Big(\frac{2}{\tau_i-\tau_{i-1}}\Big)^{q-1}\widetilde{\sigma}^{(q-1)}_i(1).
\end{align*}
This can be written more compactly as $V(t)a=G(t)\widetilde{V}\widetilde{a}$ and thereby \eqref{eq::form_nondim::vand} is equivalent to \eqref{eq::form2::vand}. The remaining constraints are in common.\hfill $\square$
    
    \section{Proof of Proposition~\ref{prop::variable}} \label{app::variable}
    We first prove the necessary condition. To obtain a contradiction, assume $t^\star=Rd^\star$ is a local minimiser of \eqref{eq::form3} but $d^\star$ is not a local minimiser of \eqref{eq::form6}. There exists some $d^\dagger$ such that $\lVert d^\dagger - d^\star \rVert < \varepsilon$ and $J(Rd^\dagger)<J(Rd^\star)$. Hence, there exists a $t^\dagger=Rd^\dagger$ such that $\lVert t^\dagger - t^\star \rVert < \lVert R \rVert \varepsilon$ and $J(t^\dagger)<J(t^\star)$, which contradicts $t^\star$ being a local minimiser of \eqref{eq::form3}.

To prove the sufficient condition, we first state a useful fact. Consider the times $t_1,~t_2\in\mathbb{R}^{l+1}$ and their differences $d_1,~d_2\in\mathbb{R}^{l}$, respectively, such that $d_1=d_2$. The formulation \eqref{eq::form_nondim} depends only on the difference between times $\tau_i-\tau_{i-1}$, $i=1,\ldots,l$, therefore $J(t_1)=J(t_2)$. We now prove the sufficient condition by contradiction. Assume that $d^\star$ is a local minimiser of \eqref{eq::form6} but $Rd^\star$ is not a local minimiser of \eqref{eq::form3}. Therefore, a $t^\dagger$ exists such that $\lVert t^\dagger - R d^\star \rVert < \varepsilon$ and $J(t^\dagger)<J(Rd^\star)$. However, as the formulation depends only on the difference between times, there must also exist some $d^\dagger$ such that $J(t^\dagger)=J(Rd^\dagger)$. Hence, there is a $d^\dagger$ such that $\lVert d^\dagger - d^\star \rVert < \varepsilon / \lVert R \rVert $ and $J(Rd^\dagger)<J(Rd^\star)$, which contradicts the assertion that $d^\star$ is a local minimiser of \eqref{eq::form6}. \hfill $\square$
    \section{Proof of Proposition~\ref{prop::gradient}} \label{app::gradient}
    We start by performing a similar reformulation as in the proof of Proposition~\ref{prop::alg_complexity}. We note that $G(t)$ depends only on the difference between times $\delta_i=(\tau_i-\tau_{i-1})/2$, so without loss of generality we reparametrise it as $G(d):\mathbb{R}^{l}\rightarrow\mathbb{R}^{2kl\times 2kl}$. This allows us to recast \eqref{eq::form6} as
    \begin{subequations}\label{eq::reform2}
        \begin{align}
            J(Rd)=\min_{f} \quad &f^T(G(d)\widetilde{V})^{-T}\widetilde{H}(G(d)\widetilde{V})^{-1}f, \\
            \text{s.t. } \quad & Ef = 0, \\
            & Pf = b.
        \end{align}
    \end{subequations}
    We make the change of variable $f = \overline{f} + Zg$ such that $P\overline{f}=b$ and  $PZ=0$. Further, we let $g=Yh$ such that $EZY=0$ and $Y\in\mathbb{R}^{(kl-m/2)\times (2kl-m)}$. Substituting the two change of variables into \eqref{eq::reform2} results in the unconstrained optimisation program
    \begin{align}\label{eq::unconstrained}
        J(Rd)=\min_{h} \quad h^T Q(d) h + h^Tq(d) + \overline{q}(d),
    \end{align}
    where
        \begin{align*}
            Q(d) &= Y^TZ^T (G(d)\widetilde{V})^{-T}\widetilde{H}(G(d)\widetilde{V})^{-1} ZY, \\
            q(d) &= 2 Y^TZ^T (G(d)\widetilde{V})^{-T}\widetilde{H}(G(d)\widetilde{V})^{-1} \overline{f}, \\
            \overline{q}(d) &= \overline{f}^T(G(d)\widetilde{V})^{-T}\widetilde{H}(G(d)\widetilde{V})^{-1}\overline{f}.
        \end{align*}

    Let $h^\star(d)$ be the argument that minimizes \eqref{eq::unconstrained} such that 
    \begin{align*}
        J(Rd)&=h^\star(d)^T Q(d) h^\star(d) + h^\star(d)^T q(d) + \overline{q}(d).
    \end{align*}
    %  note that no new variables will be introduced and so we will stop explicitly stating time dependence, i.e., we will write $J(t),~Q(t),~q(t),~\overline{q}(t),~h^\star(t)$ as $J,~Q,~q,~\overline{q},~h^\star$.
    
    % The partial derivative of $J(t)$ with respect to time $t$ is then
    % \begin{align*}
    %     \nabla_t J(t) &=\big(\nabla_t h^\star(t)\big) ^TQ(t) h^\star(t) + h^\star(t)^T \big(\nabla_t Q(t)\big) h^\star(t)  \\ 
    %     & + \frac{1}{2}h^\star(t)^T Q(t) \nabla_t h^\star(t)
    %     + (\nabla_t h^\star(t))^Tq(t) +h^\star(t)^T \nabla_t q(t) + \nabla_t \overline{q}(t). 
    % \end{align*}
    A necessary condition for the optimality of the fixed-time minimum-snap trajectory is \begin{align}\label{eq::opt_cond_1}
        2Q(d)h^*(d)+q(d)&=0.
    \end{align}
    Substituting \eqref{eq::opt_cond_1} into $\nabla_d J(Rd)$ yields
    \begin{align} \label{eq::partial_J_opt}
        \nabla_d J(Rd) &= h^\star(d)^T \big(\nabla_d Q(d)\big) h^\star(d) \\
        & \quad + h^\star(d)^T \nabla_d q(d) + \nabla_d \overline{q}(d). \nonumber
    \end{align}
    We will now identify the nonzero components of $\nabla_d Q(d)$, $\nabla_d q(d)$ and $\nabla_d \overline{q}(d)$ and use them to form compact expressions for $\nabla_d J(Rd)$. 
    For $i=j$, the partial derivatives of the block-diagonal components of $G(d)$ and $\widetilde{H}$
    % \declan[inline]{Put two $i=j$ cases together. And change $G_0$}
    \begin{align*}
        \frac{\partial G_j}{\partial \delta_i} &= \mathrm{diag}\{0, -\delta_j^{-2}, \ldots,-(s-1)\delta_j^{-s}, \ldots \\
        & \quad 0, -\delta_j^{-2}, \ldots,-(s-1)\delta_j^{-s}\},
    \end{align*}
    and
    \begin{align}
        \frac{\partial}{\partial \delta_j}\big(\frac{1}{\delta_i^{2k-1}}H_i(-1,1)\big) = \frac{1-2k}{\delta_i^{2k}}H_i(-1,1). \label{eq::prop2_p2}
    \end{align}
    Otherwise, the partial derivatives of the block-diagonal components of $G(d)$ and $\widetilde{H}$ are zero for $i\neq j$.
    
    To ease notation in the final expression we introduce
    \begin{align}\label{eq::prop2_p1}
        F(\delta_i)&=\frac{\partial G_i(\tau_{i-1},\tau_i)}{\partial \delta_j}G_i\tau_{i-1},\tau_i)^{-1}, \\
        &= \mathrm{diag}\{0, -\delta_j^{-1}, \ldots,-(s-1)\delta_j^{-1}, \ldots \nonumber \\
        & \quad \quad 0, -\delta_j^{-1}, \ldots,-(s-1)\delta_j^{-1}\}. \nonumber
    \end{align}
    
    The last calculation required is
    \begin{align} \label{eq::prop2_p3}
        \nabla_d (G(d) \widetilde{V})&= - \widetilde{V}^{-1} (\nabla_d G(d)) G^{-1}(t).
    \end{align}
    
    Substituting \eqref{eq::prop2_p1}, \eqref{eq::prop2_p2} and \eqref{eq::prop2_p3} into the partial derivatives $\nabla_d Q(d),~\nabla_d q(d)$ and $\nabla_d \overline{q}(d)$ then evaluating \eqref{eq::partial_J_opt} yields the compact expressions \eqref{eq::partialJ} as required.\hfill $\square$
% Can use something like this to put references on a page
% by themselves when using endfloat and the captionsoff option.
\ifCLASSOPTIONcaptionsoff
  \newpage
\fi

% trigger a \newpage just before the given reference
% number - used to balance the columns on the last page
% adjust value as needed - may need to be readjusted if
% the document is modified later
%\IEEEtriggeratref{8}
% The "triggered" command can be changed if desired:
%\IEEEtriggercmd{\enlargethispage{-5in}}

% references section

% can use a bibliography generated by BibTeX as a .bbl file
% BibTeX documentation can be easily obtained at:
% http://mirror.ctan.org/biblio/bibtex/contrib/doc/
% The IEEEtran BibTeX style support page is at:
% http://www.michaelshell.org/tex/ieeetran/bibtex/
\bibliographystyle{IEEEtran}
% argument is your BibTeX string definitions and bibliography database(s)
\bibliography{IEEEabrv, mybibfile}

% Generated by IEEEtran.bst, version: 1.14 (2015/08/26)
\begin{thebibliography}{10}
\providecommand{\url}[1]{#1}
\csname url@samestyle\endcsname
\providecommand{\newblock}{\relax}
\providecommand{\bibinfo}[2]{#2}
\providecommand{\BIBentrySTDinterwordspacing}{\spaceskip=0pt\relax}
\providecommand{\BIBentryALTinterwordstretchfactor}{4}
\providecommand{\BIBentryALTinterwordspacing}{\spaceskip=\fontdimen2\font plus
\BIBentryALTinterwordstretchfactor\fontdimen3\font minus
  \fontdimen4\font\relax}
\providecommand{\BIBforeignlanguage}[2]{{%
\expandafter\ifx\csname l@#1\endcsname\relax
\typeout{** WARNING: IEEEtran.bst: No hyphenation pattern has been}%
\typeout{** loaded for the language `#1'. Using the pattern for}%
\typeout{** the default language instead.}%
\else
\language=\csname l@#1\endcsname
\fi
#2}}
\providecommand{\BIBdecl}{\relax}
\BIBdecl

\bibitem{mahony2012multirotor}
R.~Mahony, V.~Kumar, and P.~Corke, ``Multirotor aerial vehicles: Modeling,
  estimation, and control of quadrotor,'' \emph{IEEE Robotics and Automation
  Magazine}, vol.~19, no.~3, pp. 20--32, 2012.

\bibitem{Mellinger2011}
D.~Mellinger and V.~Kumar, ``{Minimum Snap Trajectory Generation and Control
  for Quadrotors},'' \emph{Proceedings - IEEE International Conference on
  Robotics and Automation}, no. August, pp. 2520--2525, 2011.

\bibitem{Richter2016}
C.~Richter, A.~Bry, and N.~Roy, ``{Polynomial Trajectory Planning for
  Aggressive Quadrotor Flight in Dense Indoor Environments},'' \emph{Springer
  Tracts in Advanced Robotics}, vol. 114, pp. 649--666, 2016.

\bibitem{DeAlmeida2017}
M.~M. {De Almeida} and M.~Akella, ``{New numerically stable solutions for
  minimum-snap quadcopter aggressive maneuvers},'' in \emph{Proceedings of the
  American Control Conference}, 2017.

\bibitem{burke2020}
D.~Burke, A.~Chapman, and I.~Shames, ``Generating minimum-snap quadrotor
  trajectories really fast,'' in \emph{[Accepted] 2020 IEEE/RSJ International
  Conference on Intelligent Robots and Systems (IROS 2020)}.\hskip 1em plus
  0.5em minus 0.4em\relax IEEE, 2020.

\bibitem{wang2020generating}
Z.~Wang, H.~Ye, C.~Xu, and F.~Gao, ``Generating large-scale trajectories
  efficiently using descriptions of polynomials,'' \emph{arXiv preprint
  arXiv:2011.02662}, 2020.

\bibitem{burri2015real}
M.~Burri, H.~Oleynikova, M.~W. Achtelik, and R.~Siegwart, ``Real-time
  visual-inertial mapping, re-localization and planning onboard mavs in unknown
  environments,'' in \emph{2015 IEEE/RSJ International Conference on
  Intelligent Robots and Systems (IROS)}.\hskip 1em plus 0.5em minus
  0.4em\relax IEEE, 2015, pp. 1872--1878.

\bibitem{de2019real}
M.~M. de~Almeida, R.~Moghe, and M.~Akella, ``Real-time minimum snap trajectory
  generation for quadcopters: Algorithm speed-up through machine learning,'' in
  \emph{2019 International Conference on Robotics and Automation (ICRA)}.\hskip
  1em plus 0.5em minus 0.4em\relax IEEE, 2019, pp. 683--689.

\bibitem{Van_Nieuwstadt1998-vs}
M.~van Nieuwstadt, M.~Rathinam, and R.~M. Murray, ``Differential flatness and
  absolute equivalence of nonlinear control systems,'' \emph{SIAM J. Control
  Optim.}, vol.~36, no.~4, pp. 1225--1239, Jul. 1998.

\bibitem{Murray95differentialflatness}
R.~M. Murray, M.~Rathinam, and W.~Sluis, ``Differential flatness of mechanical
  control systems: A catalog of prototype systems,'' in \emph{Proceedings of
  the 1995 ASME International Congress and Exposition}, 1995.

\bibitem{eth_implementation}
\BIBentryALTinterwordspacing
{Achtelik, Markus and Burri, Michael and Oleynikova, Helen and Bähnemann, Rik
  and Popović, Marija}, ``{mav\_trajectory\_generation},'' 2019. [Online].
  Available: \url{https://github.com/ethz-asl/mav\_trajectory\_generation}
\BIBentrySTDinterwordspacing

\bibitem{Davis1963}
P.~J. Davis, ``{Interpolation and Approximation},'' in \emph{Interpolation and
  Approximation}.\hskip 1em plus 0.5em minus 0.4em\relax Dover Publications,
  1963.

\bibitem{Simon1993}
D.~Simon and C.~Isik, ``{A trigonometric trajectory generator for robotic
  arms},'' \emph{International Journal of Control}, vol.~57, no.~3, pp.
  505--517, 1993.

\bibitem{sun2006optimization}
W.~Sun and Y.-X. Yuan, \emph{Optimization theory and methods: nonlinear
  programming}.\hskip 1em plus 0.5em minus 0.4em\relax Springer Science \&
  Business Media, 2006, vol.~1.

\bibitem{nesterov2017random}
Y.~Nesterov and V.~Spokoiny, ``Random gradient-free minimization of convex
  functions,'' \emph{Foundations of Computational Mathematics}, vol.~17, no.~2,
  pp. 527--566, 2017.

\bibitem{PX4DevTeam2020}
\BIBentryALTinterwordspacing
{PX4 Dev Team}, ``{PX4 Autopilot User Guide (Master)},'' 2020. [Online].
  Available: \url{https://docs.px4.io/master/en/index.html}
\BIBentrySTDinterwordspacing

\bibitem{karaman2011sampling}
S.~Karaman and E.~Frazzoli, ``Sampling-based algorithms for optimal motion
  planning,'' \emph{The international journal of robotics research}, vol.~30,
  no.~7, pp. 846--894, 2011.

\bibitem{Fox1969}
L.~Fox, Isaacson, and Keller, \emph{{Analysis of Numerical Methods}}, 1969,
  vol.~53, no. 384.

\bibitem{cantoni2020structured}
M.~Cantoni, F.~Farokhi, E.~Kerrigan, and I.~Shames, ``Structured computation of
  optimal controls for constrained cascade systems,'' \emph{International
  Journal of Control}, vol.~93, no.~1, pp. 30--39, 2020.

\end{thebibliography}

% \begin{IEEEbiography}{Michael Shell}
% Biography text here.
% \end{IEEEbiography}

\end{document}